\documentclass[11pt]{article}
\usepackage{amsmath,amssymb,amsfonts,mathrsfs,amsthm,graphicx}
\usepackage[top=30truemm,bottom=30truemm,left=35truemm,right=35truemm]{geometry}

\usepackage{url}

\newtheorem{lem}{Lemma}[section]
\newtheorem{prop}{Proposition}[section]
\newtheorem{thm}{Theorem}[section]

\newtheorem{ass}{Assumption}[section]
\theoremstyle{definition}

\newtheorem{ex}{Example}[section]

\theoremstyle{remark}
\newtheorem{rem}{Remark}[section]
\numberwithin{equation}{section}

\renewcommand{\d}{\, \mathrm{d}}

\begin{document}
\title{
Minimizing Benchmark-Relative Drawdown Duration \\ via Occupation Time Penalization
}
\author{Jun Sekine\thanks{The University of Osaka, Graduate School of Engineering Science, Division of Mathematical Science for Social Systems. Email: sekine@sigmath.es.osaka-u.ac.jp}
\thanks{Jun Sekine was supported by JSPS KAKENHI Grant Number JP23K01450.} \and Marcus Wunsch\thanks{ZHAW School of Management and Law, Institute of Wealth \& Asset Management. Email: wuns@zhaw.ch}}

\maketitle

\begin{abstract}
\noindent
We study a continuous-time portfolio optimization problem in which an
investor is evaluated relative to a non-replicable benchmark and seeks
to control the persistence of benchmark-relative underperformance. We
introduce a benchmark-relative drawdown-duration criterion that
penalizes the expected discounted time spent in unfavorable
benchmark-relative performance states.
Despite the path dependence induced by benchmark-relative drawdowns, we
show that the problem admits a one-dimensional Markovian representation
and derive the associated Hamilton--Jacobi--Bellman equation. We obtain
an explicit projection-based characterization of the optimal feedback
control, establish a verification theorem, and identify geometric
settings under which the associated closed-loop reflected diffusion
admits a unique strong solution.
Our results provide a tractable downside-risk-oriented alternative to
classical benchmark-tracking formulations and reveal a novel projection-based
control structure for benchmark-relative risk management.
\end{abstract}

\section{Introduction}
Benchmark-relative investing plays a central role in modern asset management. Pension funds, mutual funds, hedge funds, and other institutional investors are commonly evaluated relative to benchmark indices rather than in absolute terms. As noted by Litterman and Winkelmann~\cite{litterman_managing_1996}, the primary task faced by many professional fund managers is to outperform a specified benchmark. 
As a consequence, a substantial literature has developed around benchmark-relative portfolio optimization and tracking problems, where investors seek either to outperform a benchmark or to control deviations from it under various market and risk constraints. 
Classical benchmark-tracking formulations typically focus on the magnitude of benchmark-relative deviations. 
Roll~\cite{roll_meanvariance_1992} analyzes tracking error through a mean--variance framework, where the variance of benchmark-relative returns serves as 
the principal measure of active risk. 
Subsequent studies have extended this idea to dynamic settings. 
For example, in a continuous-time setting, Zhao~\cite{zhao_dynamic_2007} studies a benchmark-oriented mean--variance portfolio selection problem with {\it terminal} tracking criteria, while Yao, Zhang, and Zhou~\cite{yao_tracking_2006} consider benchmark-tracking formulations based on {\it running} quadratic deviation penalties between portfolio wealth and benchmark wealth.
Beyond tracking-error formulations, benchmark-relative portfolio optimization has also been studied through goal-attainment and shortfall-control criteria. 
Browne~\cite{browne_risk-constrained_2000, browne_beating_1999} considers active portfolio management problems in which investors seek to maximize the probability of outperforming a benchmark or to control benchmark-relative shortfall risk. 
In a related vein, Al-Aradi and Jaimungal~\cite{al-aradi_outperformance_2018} maximize the expected growth-rate differential relative to a performance benchmark while penalizing deviations from a portfolio to be tracked. 

While these approaches provide natural measures of benchmark-relative
risk, they quantify either the magnitude of benchmark-relative
deviations or the likelihood of benchmark-relative shortfalls, and do
not directly account for the duration of benchmark-relative
underperformance.
In this paper, we consider an investor whose performance is evaluated
relative to a {\it non-replicable} benchmark process and introduce a
benchmark-relative drawdown process defined with respect to the running
maximum of benchmark-relative wealth. Since the benchmark is not assumed
to be attainable by the investor, it serves as an aspirational
performance target. Consequently, our focus is on the persistence of
benchmark-relative underperformance rather than solely on the magnitude
of benchmark-relative deviations or the probability of benchmark
outperformance. More specifically, we seek to minimize the expected
discounted time during which the benchmark-relative drawdown exceeds a
prescribed level.
The resulting criterion may be viewed as a
downside-risk-oriented complement to classical benchmark-tracking formulations. 
From a practical perspective, the duration of
benchmark-relative underperformance can be as important as its
magnitude, particularly when evaluating active managers relative to a benchmark.
Moreover, drawdown-based risk measures have become one of the standard
downside-risk indicators in portfolio management because they capture
cumulative losses from previous peaks rather than isolated return
fluctuations. In the benchmark-relative setting, 
benchmark-relative drawdowns provide a natural extension of this idea 
by capturing cumulative relative losses
from previous performance peaks, thereby motivating duration-based
criteria that penalize prolonged periods of severe underperformance.

The importance of persistence-based risk criteria has been recognized in several related contexts. 
Bayraktar and Young~\cite{bayraktar_optimal_2010} introduce an occupation-time criterion and study optimal investment strategies that minimize the expected time that wealth remains below a prescribed level. 
Related drawdown-based risk criteria have subsequently been investigated by Angoshtari, Bayraktar, and Young~\cite{angoshtari_minimizing_2015}, who study portfolio optimization problems involving drawdown risk. 
More recently, Brinker~\cite{brinker_minimal_2021} and Brinker and Schmidli~\cite{brinker_optimisation_2023} explicitly incorporate drawdown duration into stochastic control problems through occupation-time penalties on unfavorable drawdown regions.
Although closely related in spirit, the drawdown-duration formulations of Brinker~\cite{brinker_minimal_2021} and Brinker and Schmidli~\cite{brinker_optimisation_2023} focus on drawdowns of wealth or surplus relative to their running maxima and are developed primarily in insurance and
surplus-management settings. 
In contrast, our focus is on benchmark-relative underperformance in active portfolio
management.
For other drawdown-related works in financial mathematics, we refer to 
Grossman and Zhou~\cite{grossman_optimal_1993}, Cvitanic and Karatzas~\cite{cvitanic_portfolio_1994}, Sekine~\cite{sekine_long-term_2013}, Cherny and Ob\l oj~\cite{cherny_portfolio_2013}, Elie and Touzi~\cite{elie_optimal_2008}, Roche~\cite{roche_asset_2019}, and Bo et al.~\cite{bo_optimal_2026}, for example, where the authors study utility maximizations under drawdown constraints.

The contributions of this paper are fourfold. 
First, we introduce a new benchmark-relative portfolio optimization problem that explicitly accounts for the duration of benchmark-relative underperformance. 
Second, despite the path dependence induced by the running maximum of benchmark-relative wealth, we show that the problem admits a tractable one-dimensional Markovian representation. 
Third, we derive an explicit projection-based feedback rule, which characterizes the optimal benchmark-relative volatility exposure in closed form.
Finally, we provide several examples illustrating how geometric properties of the admissible set can restore strong well-posedness of the closed-loop reflected diffusion, despite potential discontinuities in the optimal feedback coefficient.

The remainder of the paper is organized as follows. 
Section 2 introduces the market model and benchmark-relative wealth dynamics. Section 3 formulates the optimization problem and its decomposition into auxiliary
control problems. 
Section 4 presents the main theoretical results and studies the well-posedness of the optimal closed-loop system. 
Numerical illustrations and concluding remarks are provided in the final section.
\section{Model Setup}

We consider a continuous-time financial market model on a filtered
probability space
$\left(\Omega, \mathcal{F}, \mathbb{P},
(\mathcal{F}_t)_{t \ge 0}\right)$ satisfying the usual conditions.
\\
Let $W = (W_t)_{t \ge 0}$, $W_t:=(W^1_t,\dots,W^d_t)^\top$ be a $d$-dimensional
$\mathcal{F}_t$-Brownian motion.

\subsection{Financial Market}

We consider a market consisting of $n(\le d)$ risky assets, whose price process
$S = (S_t)_{t \ge 0}$, 
$S_t:=(S^1_t,\dots,S^n_t)^\top$
satisfies
\[
\d S_t = \mathrm{diag}(S_t)\left(\sigma \d W_t + \mu \d t\right),
\quad S_0 \in \mathbb{R}_{++}^n,
\]
where 
$\sigma\in {\mathbb R}^{n\times d}$,
$\mu\in {\mathbb R}^n$, and 
$\text{diag}(x) \in {\mathbb R}^{n\times n}$ 
is the diagonal matrix whose $(i,i)$-element is $x^i$ 
for $x:=(x^1,\dots,x^n)\in {\mathbb R}^n$, 
assuming $\sigma\sigma^\top>0$.
A self-financing investor allocates wealth according to a portfolio
process $\pi = (\pi_t)_{t \ge 0}$, where
$\pi_t = (\pi_t^1,\dots,\pi_t^n)^\top$ denotes the fraction of wealth
invested in each risky asset. The corresponding wealth process
$\mathcal{W} = (\mathcal{W}_t)_{t \ge 0}$ satisfies
\[
\d {\mathcal W}_t ={\mathcal W}_t
\left\{
\sum_{j=1}^n \pi^j_t \frac{\d S^j_t}{S^j_t}
+ \left( 1- \sum_{j=1}^n \pi^j_t\right)r\d t
\right\},
\quad
{\mathcal W}_0=w_0\in {\mathbb R}_{++},
\]
where $r \in \mathbb{R}$ is the risk-free rate. 
We see that
\begin{align*}
 d{\mathcal W}_t &={\mathcal W}_t \left[
\pi^\top_t \left\{
\sigma \d W_t + (\mu-r{\bf 1}_n)\d t
\right\}
+ r \d t
\right] \\
&= {\mathcal W}_t \left[
\pi^\top_t \sigma \d W_t 
+\left(
r+\pi^\top \sigma\lambda 
\right)\d t
\right]
\end{align*}
where we denote
${\bf 1}_n:=(1,\dots,1)^\top\in {\mathbb R}^n$
and
\begin{equation}
\label{eq:def-lambda}
\lambda :=
\sigma^\top \left(\sigma \sigma^\top\right)^{-1}(\mu - r\mathbf{1}_n)
\end{equation}
denotes the market price of risk.

\subsection{Benchmark and Relative Wealth}

Let $B = (B_t)_{t \ge 0}$ denote a benchmark process given by
\[
\d B_t = B_t \left( \sigma_B^\top \d W_t + \mu_B \d t \right),
\quad B_0 > 0,
\]
where $\sigma_B \in \mathbb{R}^d$ and $\mu_B \in \mathbb{R}$.
We define the benchmark-relative wealth process by
\[
Y_t := \frac{\mathcal{W}_t}{B_t}.
\]
A standard application of It\^o's formula yields
\begin{align*}
\frac{\d Y_t}{Y_t}
&=\frac{\d{\mathcal W}_t}{{\mathcal W}_t}-\frac{\d B_t}{B_t}
-\frac{\d\langle {\mathcal W},B\rangle_t}{{\mathcal W}_tB_t}
+\frac{\d\langle B\rangle_t}{B_t^2} \\
&= (\sigma^\top \pi_t - \sigma_B)^\top \d W_t
+ \left\{
r - \mu_B + |\sigma_B|^2
+ \pi_t^\top \sigma (\lambda - \sigma_B)
\right\} \d t.
\end{align*}

\subsection{Logarithmic Transformation and Control Variable}

Define the logarithmic relative wealth
\[
L_t := \log Y_t.
\]
Then $L = (L_t)_{t \ge 0}$ satisfies
\begin{align}
 \d L_t &= \frac{\d Y_t}{Y_t}-\frac{1}{2}\frac{\d\langle Y \rangle_t}{Y_t^2}
\nonumber  \\
&=\left(\sigma^\top \pi_t  -\sigma_B\right)^\top \d W_t  
\nonumber \\
+&\left\{r-\mu_B +|\sigma_B|^2 +
\pi_t^\top \sigma \left(\lambda -\sigma_B\right)
-\frac{1}{2}\left| \sigma^\top \pi_t -\sigma_B\right|^2
\right\}\d t \nonumber \\
&= \alpha_t^\top \d W_t 
+\left( p + q^\top \alpha_t 
-\frac{1}{2}\left| \alpha_t\right|^2 \right)\d t,
\label{eq:L}
\end{align}
where we define the control variable
\begin{equation}
\label{eq:def-alpha}
\alpha_t := \sigma^\top \pi_t - \sigma_B, 
\end{equation}
and constants
\begin{equation}
\label{eq:def-pq}
p := r - \mu_B + \sigma_B^\top \lambda,
\quad
q := \lambda - \sigma_B.
\end{equation}

\begin{rem}
The control $\alpha_t$ represents the volatility exposure of the portfolio
relative to the benchmark. It is related to the portfolio strategy $\pi_t$
via \eqref{eq:def-alpha}, and thus measures the excess
volatility with respect to the benchmark. Conversely,
\[
\pi_t = \left(\sigma \sigma^\top\right)^{-1}
\sigma (\alpha_t + \sigma_B),
\]
so that there is a one-to-one correspondence between $\pi$ and $\alpha$.
This representation allows us to formulate the control problem directly
in terms of the benchmark-relative dynamics.
\end{rem}

\subsection{Drawdown and State Variable}

Define the running maximum of the relative wealth by
\[
\bar{Y}_t := \max_{0 \le s \le t} Y_s \vee \bar{y}_0,
\]
where $\bar{y}_0 > 0$ is given. The corresponding relative drawdown process is
\[
\mathrm{DD}_t := \frac{\bar{Y}_t}{Y_t}.
\]
We introduce the logarithmic drawdown process
\[
X_t := \log \mathrm{DD}_t = \bar{L}_t - L_t,
\label{eq:X}
\]
where $\bar{L}_t := \log \bar{Y}_t$. 
The process $X_t$ represents the logarithmic deviation of the
benchmark-relative wealth from its running maximum and serves as the
state variable in the control problem. In particular, large values of
$X_t$ correspond to severe relative drawdowns.
Combining \eqref{eq:L} and \eqref{eq:X}, we see that
\begin{equation}
\d X^\alpha_t
=- \alpha_t^\top \d W_t
-\left(
p + q^\top \alpha_t - \frac12 |\alpha_t|^2
\right) \d t
+ \d \bar L^\alpha_t,
\quad X^\alpha_0=x_0 \ge 0
\label{eq:dyn-X}
\end{equation}
with $x_0:=\log \bar{y}_0-\log y_0$, 
where we write $X^\alpha:=X$, $\bar{L}^\alpha:=\bar{L}$, to highlight the dependency with the associated control process $\alpha:=(\alpha_t)_{t\ge 0}$. 
Note that \eqref{eq:dyn-X} is regarded as the reflecting SDE (at the
barrier $x=0$), where the solution 
$\left(X^\alpha,\bar{L}^\alpha\right)$ satisfies
\begin{itemize}
 \item[\rm (i)] $X^\alpha\ge 0$, 
 \item[\rm (ii)] $\bar{L}^\alpha$ is adapted, non-decreasing,  
and 
$\displaystyle  \int_0^t {\bf 1}_{\{ X^\alpha_s>0\}} 
\d \bar{L}^\alpha_s=0$ for  $t\ge 0$.
\end{itemize}

\subsection{Admissible Controls}

Let $K \subset \mathbb{R}^n$ be a closed convex set representing portfolio
constraints: that is, we assume, 
\[
 \pi_t \in K 
\quad {}^\forall t\ge 0.
\]
For the control process $(\alpha_t)_{t\ge 0}$, 
defined by \eqref{eq:def-alpha}, we define the set of admissible controls by
\begin{equation*}
{\mathscr A}_{\tilde{K}} :=
\left\{
(\alpha_t)_{t\ge 0}
\;\middle|\;
\begin{aligned}
&\text{$d$-dim.\ progressively measurable, such that} \\
&\int_0^t |\alpha_u|^2 \, \d u < \infty
\text{ and }
\alpha_t \in \tilde{K}
\text{ for all } t \ge 0
\end{aligned}
\right\}.
\end{equation*}
where we define
\[
 \tilde{K}:=\left\{ \sigma^\top x-\sigma_B\,
\middle|\, x \in K\right\} \ \subset {\mathbb R}^d.
\]
Throughout the paper, we assume the following.
\begin{ass}\label{ass:non-replicability}
The set $K \subset \mathbb{R}^n$ is nonempty, compact, and convex.
Moreover,
\[
 \sigma_B \not\in \sigma^\top K,
\]
which means that the benchmark volatility is not attainable,
i.e., the benchmark cannot be replicated.
\end{ass}
\begin{rem}
Under this assumption, 
it holds that
\[
\inf_{x\in K} \left|\sigma^\top x - \sigma_B\right| > 0.
\]
Further, the set $\tilde{K}$
is also nonempty, compact, and convex, and satisfies
\[
\underline a:=\inf_{a\in\tilde K}|a|>0.
\]
\end{rem}
\begin{ex}[Non-replicable benchmark]
\noindent{(i)} Let $d>n$.
The attainable set $\sigma^\top K$ is contained 
in a lower-dimensional subspace of $\mathbb R^d$. 
Therefore, one can choose $\sigma_B \notin \sigma^\top K$,
which corresponds to a benchmark driven by risk factors that are not
spanned by the traded assets.

\noindent(ii) Even in $d=n$ case, 
realistic portfolio constraints may prevent the investor from
replicating the benchmark. For instance, consider 
\[
K:=\left\{\pi\in\mathbb R^n \,\middle|\, \pi_i\ge 0 \text{ for all }i,
\ \sum_{i=1}^n \pi_i \le \bar{k} \right\},
\]
where $\bar{k}>0$ is a leverage bound. 
In this case, it is again
possible that $\sigma_B \notin \sigma^\top K$.
\end{ex}

\begin{ex}[Growth Optimal Portfolio in a Larger Market]
Suppose, in addition to the $n$ liquidly
tradable assets $(S^1,\dots,S^n)$, 
there are $(d-n)$ additional assets $(S^{n+1},\dots,S^d)$ 
that are not accessible to the investor. 
Suppose that the full vector of asset prices
$\bar S = (S^1,\dots,S^d)^\top$ follows
\[
\d \bar S_t = \mathrm{diag}\left(\bar S_t\right)
\left(\bar\sigma \d W_t + \bar\mu \d t\right),
\]
with $\bar{\sigma}\in {\mathbb R}^{d\times d}$
and $\bar{\mu}\in {\mathbb R}^d$, assuming $\bar\sigma$ is invertible.
In the enlarged market consisting of all $d$ assets, one can construct
the growth optimal portfolio (GOP), which serves as a natural benchmark
and achieves the maximal long-term growth rate. Denote its volatility by
$\sigma_B$.
However, when only the first $n(<d)$ assets are accessible, the GOP is
no longer replicable. In particular, its volatility $\sigma_B$ does not
belong to the attainable set $\sigma^\top K$, and hence
$\sigma_B \notin \sigma^\top K$.
\end{ex}
\section{Optimization Problem and Decomposition}

\subsection{Main Optimization Problem}

Let $\ell := \log \frak{d}$ denote the logarithmic drawdown threshold. 
The primary objective of the investor is to minimize the expected discounted time during which the drawdown exceeds the prescribed level. 
This leads to the following stochastic control problem:
\begin{align}
\label{eq:MP}
\bar{U}(x) := \inf_{\alpha \in \mathscr{A}_{\tilde K}}
\mathbb{E} \left[
\int_0^\infty e^{-\delta t} \mathbf{1}_{\{X_t > \ell\}} \d t
\,\middle|\, X_0 = x
\right], \quad x \in [0,\infty),
\end{align}
where $\delta>0$ is the discounting rate. 
This formulation penalizes the persistence of large drawdowns and provides a duration-based measure of downside risk. In contrast to classical tracking problems, which focus on instantaneous deviations from a benchmark, the objective above captures the time spent in unfavorable states.

\subsection{Decomposition into Auxiliary Problems}

To analyze problem~\eqref{eq:MP}, we exploit its structural decomposition into two auxiliary control problems corresponding to small and large drawdown regions.

\medskip

\noindent
\textbf{(S) Small drawdown region $[0,\ell]$.}
For $x \in [0,\ell]$, we consider the problem of maximizing the time until
the drawdown exceeds the threshold:
\begin{equation}\label{eq:SDR}
\bar{V}(x) := \inf_{\alpha \in \mathscr{A}_{\tilde K}}
\mathbb{E} \left[
e^{-\delta \tau_d}
\,\middle|\, X_0 = x
\right],
\quad\text{where}\quad
\tau_d := \inf \{ t \ge 0 \mid X_t > \ell \}.
\end{equation}
This problem characterizes the behavior of the system within the small
drawdown region.

\medskip

\noindent
\textbf{(L) Large drawdown region $(\ell,\infty)$.}
For $x > \ell$, we consider the recovery problem from a large drawdown:
\begin{equation}
\bar{W}(x) := \sup_{\alpha \in \mathscr{A}_{\tilde K}}
\mathbb{E} \left[
e^{-\delta \rho_d}
\,\middle|\, X_0 = x
\right],
\quad\text{where}\quad
\rho_d := \inf \{ t \ge 0 \mid X_t \le \ell \}.
\label{eq:LDR}
\end{equation}
This problem describes the optimal recovery dynamics from the large
drawdown region.

\begin{rem}
The decomposition transforms the original problem, which involves a
discontinuous running cost, into two control problems whose payoffs are
determined by hitting times of the threshold $\ell$. This leads to
tractable HJB equations in each region and a free-boundary-type matching
at the threshold.
\end{rem}

\begin{rem}
In the (large) drawdown region, the reflection term in the dynamics \eqref{eq:dyn-X} of $X^\alpha:=X$ does not increase as
\[
\int_0^t 1_{\{X^\alpha_s>0\}}\,\d\bar L^\alpha_s=0.
\]
Hence, during the time that the process remains above the level $\ell(>0)$, the state process evolves according to the controlled diffusion
\[
\d X^\alpha_t=-\alpha_t^\top \d W_t
-\left(p+q^\top \alpha_t-\frac12|\alpha_t|^2\right)\d t. 
\]
\end{rem}
\noindent 
In the subsequent sections, we first analyze two auxiliary problems and then combine the results to obtain a complete characterization of the original optimization problem.
\section{Main Results}
Noting that the running cost term of \eqref{eq:MP} is equal to zero in the region $\{ x<\ell \}$, and equal to one in the region $\{ x>\ell\}$, we distinguish the two regions

\begin{itemize}
\item[(S)] $0\le x <\ell$ (the small drawdown region)
\item[(L)] $x > \ell$ (the large drawdown region)
\end{itemize}
and derive the Hamilton-Jacobi-Bellman equation
for \eqref{eq:MP} in each region (S) and (L):
\begin{align}
\label{eq:HJB-U-left}
-\delta U(x)
+\inf_{a\in\tilde K}
\left[
\frac{|a|^2}{2}U''(x)
-\left(p+q^\top a-\frac{|a|^2}{2}\right)U'(x)
\right]
&=0,
\quad (0\le x<\ell), \\
\label{eq:HJB-U-right}
-\delta U(x)
+1
+\inf_{a\in\tilde K}
\left[
\frac{|a|^2}{2}U''(x)
-\left(p+q^\top a-\frac{|a|^2}{2}\right)U'(x)
\right]
&=0,
\quad (x>\ell), \\
U'(0+)=0, 
\quad
\lim_{x\to\infty}U(x) = \frac{1}{\delta},
\quad
U(\ell-)=U(\ell+)
\quad\text{and}\quad
U'&(\ell-)=U'(\ell+). 
\label{eq:HJB-U-b}
\end{align}
\begin{rem}
The boundary conditions \eqref{eq:HJB-U-b} are explained as follows: 
(i) $U'(0+)=0$ follows from the reflection at $0$.
(ii) As $x\uparrow \infty$, we infer from \eqref{eq:MP} that
\[
\lim_{x\to\infty}U(x) =\int_0^\infty e^{-\delta t} \d t = \frac{1}{\delta}.
\]
(iii) $U(\ell-)=U(\ell+)$ is the matching condition at $x=\ell$, so that the value function $U$ is continuous at the threshold $\ell$. 
(iv) This is the smooth fitting condition at $x=\ell$, ensuring that the value function $U$ is $C^1$ at the threshold: $U'(\ell-)=U'(\ell+)$. 
\end{rem}
\noindent
Also, we write the HJB equations 
for the auxiliary problems \eqref{eq:SDR} and \eqref{eq:LDR} as follows:
\begin{itemize}
 \item[(S)] {\bf Small drawdown region} ($0\le x\le \ell$)
\begin{equation}
\label{eq:HJB-V}
\begin{split}
&-\delta V 
+\inf_{a\in \tilde{K}}
\left[ \frac{|a|^2}{2} V'' 
-\left(
p+q^\top a -\frac{|a|^2}{2}
\right) V'\right]=0, 
\quad (0\le x< \ell), \\
&V(\ell)=1, \quad V'(0+)=0.
\end{split}
\end{equation}
 \item[(L)] {\bf Large drawdown region} ($x>\ell$)
\begin{equation}
\label{eq:HJB-W}
\begin{split}
&-\delta W(x)
+\sup_{a\in\tilde K}
\left[
\frac{|a|^2}{2} W''(x)
-\left(p+q^\top a-\frac12|a|^2\right)W'(x)
\right]
=0,
\quad (x>\ell), \\
&W(\ell)=1,\quad \lim_{x\to\infty} W(x)=0.
\end{split}
\end{equation}
\end{itemize}
\begin{rem}
The boundary conditions of \eqref{eq:HJB-V} and \eqref{eq:HJB-W} 
are interpreted as follows: 
$V(\ell)=1$, $W(\ell)=1$, and $W(\infty)=0$ are naturally seen from 
the definitions of the value functions, \eqref{eq:HJB-V} and \eqref{eq:HJB-W}.
$V'(0+)=0$ follows from the reflection at $0$.
\end{rem}

\subsection{Results for Small Drawdown Region}

First, we introduce the results on the first
auxiliary problem \eqref{eq:SDR} in the small drawdown region.
First, we see the following.
\begin{thm}\label{thm:U-solves-HJB}
For a given $\kappa>0$, there exists a unique solution $V_\kappa\in C^2([0,\ell])$ of the following HJB equation
\begin{equation}
\begin{split}
&-\delta V_\kappa(x)
+\inf_{a\in\tilde K}
\left[
\frac{|a|^2}{2}V_\kappa''(x)
-
\left(
p+q^\top a-\frac{|a|^2}{2}
\right)V_\kappa'(x)
\right]
=0,
\quad x\in[0,\ell], \\
&V_\kappa(0)=\kappa, \quad V'_\kappa(0)=0.
\end{split}
\label{eq:HJB-Vk}
\end{equation}
Moreover, we have the relation that
\[
 V_\kappa(x)=\kappa V_1(x)
\quad (x\in [0,\ell]).
\]
\end{thm}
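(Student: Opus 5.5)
The plan is to reduce \eqref{eq:HJB-Vk} to an explicit second-order ODE $V''=F(V,V')$ with a globally Lipschitz right-hand side. Existence and uniqueness then follow from the Picard--Lindel\"of theorem, and the scaling relation follows from positive homogeneity of $F$ combined with uniqueness.

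\textbf{Step 1: the Hamiltonian.} For $(v,y,z)\in\mathbb R^3$ I define
\[
\Phi(v,y,z):=-\delta v+\min_{a\in\tilde K}\left[\frac{|a|^2}{2}z-\left(p+q^\top a-\frac{|a|^2}{2}\right)y\right].
\]
The minimum is attained because $\tilde K$ is compact and the bracket is continuous in $a$. Set $\underline a=\inf_{a\in\tilde K}|a|>0$ (by the Remark after Assumption \ref{ass:non-replicability}) and $\bar a:=\max_{a\in\tilde K}|a|<\infty$. For each fixed $a$, the bracket is affine in $(y,z)$, with $z$-slope $|a|^2/2\in[\underline a^2/2,\bar a^2/2]$ and $y$-coefficient bounded by $C_0:=|p|+|q|\bar a+\bar a^2/2$. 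Since a minimum of such functions inherits these bounds, I obtain:
\begin{itemize}
\item[(a)] for $z_2>z_1$, $\ \frac{\underline a^2}{2}(z_2-z_1)\le \Phi(v,y,z_2)-\Phi(v,y,z_1)\le \frac{\bar a^2}{2}(z_2-z_1)$;
\item[(b)] $|\Phi(v_1,y_1,z)-\Phi(v_2,y_2,z)|\le \delta|v_1-v_2|+C_0|y_1-y_2|$.
\end{itemize}

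\textbf{Step 2: inversion.} By (a), for each $(v,y)$ the map $z\mapsto\Phi(v,y,z)$ is continuous, strictly increasing and onto $\mathbb R$. Hence it has a unique zero $z=F(v,y)$, and the equation in \eqref{eq:HJB-Vk} is equivalent to $V''=F(V,V')$. This inversion, and the Lipschitz estimate below, is the only genuinely non-routine point, and it relies crucially on the non-replicability assumption $\underline a>0$. Without it the equation would degenerate and the inversion would fail.

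To get the Lipschitz bound, let $z_i=F(v_i,y_i)$. Then
\[
0=\Phi(v_1,y_1,z_1)-\Phi(v_2,y_2,z_2)=\bigl[\Phi(v_1,y_1,z_1)-\Phi(v_1,y_1,z_2)\bigr]+\bigl[\Phi(v_1,y_1,z_2)-\Phi(v_2,y_2,z_2)\bigr].
\]
By (a) the first bracket has absolute value at least $\frac{\underline a^2}{2}|z_1-z_2|$, and by (b) the second is at most $\delta|v_1-v_2|+C_0|y_1-y_2|$. Therefore
\[
|z_1-z_2|\le \frac{2}{\underline a^2}\bigl(\delta|v_1-v_2|+C_0|y_1-y_2|\bigr),
\]
so $F$ is globally Lipschitz.

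\textbf{Step 3: existence and uniqueness.} The first-order system $(V,V')'=(V',F(V,V'))$ with initial data $(\kappa,0)$ has globally Lipschitz right-hand side. Picard--Lindel\"of therefore gives a unique $C^1$ solution on all of $[0,\ell]$, with no blow-up. Since $V''=F(V,V')$ is continuous, this solution satisfies $V\in C^2([0,\ell])$ and solves \eqref{eq:HJB-Vk}. Conversely, any $C^2$ solution of \eqref{eq:HJB-Vk} satisfies $\Phi(V,V',V'')=0$, hence $V''=F(V,V')$, and so it coincides with this solution.

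\textbf{Step 4: scaling.} For $\kappa>0$ the bracket is linear in $(y,z)$ for each fixed $a$, and $\min_a(\kappa g_a)=\kappa\min_a g_a$. Hence $\Phi(\kappa v,\kappa y,\kappa z)=\kappa\Phi(v,y,z)$, and consequently $F(\kappa v,\kappa y)=\kappa F(v,y)$. It follows that $\kappa V_1$ satisfies $(\kappa V_1)''=F(\kappa V_1,\kappa V_1')$ with $(\kappa V_1)(0)=\kappa$ and $(\kappa V_1)'(0)=0$. By the uniqueness in Step 3, $V_\kappa=\kappa V_1$ on $[0,\ell]$.
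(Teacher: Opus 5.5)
Your proof is correct and follows the paper's route: reduce \eqref{eq:HJB-Vk} to $V''=F(V,V')$ with a globally Lipschitz, positively homogeneous $F$, apply Picard--Lindel\"of, and obtain $V_\kappa=\kappa V_1$ from uniqueness. The only difference is how $F$ is obtained. The paper writes it explicitly as $\mathcal F(w,v)=\sup_{a\in\tilde K}\frac{2}{|a|^2}\bigl[\delta w+\bigl(p+q^\top a-\frac{|a|^2}{2}\bigr)v\bigr]$ and reads off the Lipschitz bound from uniformly bounded coefficients, whereas you define the same function implicitly as the unique zero of the strictly increasing map $z\mapsto\Phi(v,y,z)$ and get the Lipschitz bound from your slope estimates (a)--(b).
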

\begin{proof}
See Appendix A.1.
\end{proof}
\noindent
Using the above Theorem, we obtain the following.
\begin{thm}\label{thm:normalized-V}
{\rm (1)} The solution of the HJB equation \eqref{eq:HJB-V} is expressed as
\begin{equation}
V(x):=\frac{V_\kappa(x)}{V_\kappa(\ell)},
\quad (x\in[0,\ell]),
\label{eq:V-normalized}
\end{equation}
where we use the solution of \eqref{eq:HJB-Vk}.

\smallskip

\noindent{\rm (2)}
For $x\in[0,\ell]$, the following minimizer is uniquely determined as
\begin{align}
a^*(x)
:=&\mathop{\rm argmin}_{a\in\tilde K}
\left[
\frac{|a|^2}{2}V''(x)
-\left(p+q^\top a-\frac{|a|^2}{2}\right)V'(x)
\right] 
=\Pi_{\tilde K}(\beta(x)q).
\label{eq:minimizer}
\end{align}
Here, $\Pi_{\tilde K}$ denotes the metric projection onto
the closed convex set $\tilde K$ and we define 
\begin{equation}
\beta(x):=\frac{V'(x)}{V''(x)+V'(x)},
\label{eq:def-beta-V}
\end{equation}
which satisfies $\beta(x)\ge 0$ for $x\in [0,\ell]$.

\smallskip

\noindent{\rm (3)}
With the function $x\mapsto a^*(x)$ given in \eqref{eq:minimizer}, the reflecting SDE, 
\begin{equation}
\begin{split}
\d{X}^*_t &= -{a}^*\left( X^*_t\right)^\top \d W_t 
-\left\{ p+ q^\top {a}^*\left( X^*_t\right)
-\frac{1}{2}
\left|{a}^*\left( X^*_t\right)\right|^2
\right\}\d t+ \d \overline{L^*_t}, \\
X^*_0 &=x_0\in [0,\ell],
\end{split}
\label{eq:SDE-X}
\end{equation}
has a unique strong solution for $t\in [0,\tau_d]$
that satisfies
\begin{itemize}
 \item[\rm (i)] $X^*\ge 0$, 
 \item[\rm (ii)] ${\overline L^*}$ is adapted, non-decreasing,  
$\overline{L^*_0}=\bar{\ell}_0$,  
and $\displaystyle  \int_0^t {\bf 1}_{\{ X^*_s>0\}} \d \overline{L^*_s}=0$
for  $t\in [0,\tau_d]$.
\end{itemize}
The feedback-form strategy
$(\alpha^*_t)_{t\in [0,\tau_d]}$ defined by
\[
 \alpha^*_t 
:=a^*\left( X^*_t\right)
\quad t\in \left[ 0,\tau_d\right]
\]
is optimal for the problem \eqref{eq:SDR} 
and $X^*\equiv X^{\alpha^*}$ holds.
Moreover, $V\equiv \bar{V}$ holds.
\end{thm}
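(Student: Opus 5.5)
The proof proceeds in three stages matching parts (1)--(3). For (1), we use the scaling relation $V_\kappa=\kappa V_1$ from Theorem~\ref{thm:U-solves-HJB}. The HJB operator in \eqref{eq:HJB-Vk} is positively homogeneous of degree one in $(V,V',V'')$, because the infimum of a function linear in $(V'',V')$ scales with positive constants. Hence $V:=V_\kappa/V_\kappa(\ell)$ again solves the equation, with $V'(0)=0$ and $V(\ell)=1$. For this to make sense we need $V_\kappa(\ell)>0$. We show that $V_1$ is positive and nondecreasing on $[0,\ell]$. At $x=0$ we have $V_1'(0)=0$, so the equation gives $\tfrac{|a|^2}{2}V_1''(0)=\delta$ at the optimum. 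Since $\inf_{a}|a|^2\ge\underline a^2>0$, this yields $V_1''(0)>0$, so $V_1'>0$ just to the right of $0$. A comparison argument then shows $V_1'$ cannot return to zero: at a first zero $x_1$ of $V_1'$ we would have $V_1''(x_1)\le 0$, while the equation forces $V_1''(x_1)\ge 2\delta V_1(x_1)/\sup|a|^2>0$. We conclude $V_1'>0$ on $(0,\ell]$ and $V_1\ge1$. Uniqueness of the solution of \eqref{eq:HJB-V} follows the same way: any solution $\tilde V$ has $\tilde V(0)>0$ and equals $\tilde V(0)V_1$ by the uniqueness part of Theorem~\ref{thm:U-solves-HJB}.

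For (2), we write the bracket as $\tfrac{|a|^2}{2}(V''+V')-(q^\top a)V'-pV'$. From the argument above, $V'\ge0$ and $V''>0$ wherever $V'=0$. Using the equation, $\tfrac{|a|^2}{2}(V''+V')-q^\top aV'=\delta V+pV'$ holds at the minimizer. To conclude $V''+V'>0$, the first thing to try is rewriting the equation as $\inf_a[\cdots]=\delta V>0$; this shows the quadratic coefficient cannot be nonpositive with $V'\ge0$ when the infimum is strictly positive. We expect this sign check to be the delicate point of part (2). Once $V''+V'>0$ is known, completing the square gives
\[
\frac{V''+V'}{2}\,\bigl|a-\beta q\bigr|^2+\text{const},
\]
so the unique minimizer over the compact convex set $\tilde K$ is the metric projection $\Pi_{\tilde K}(\beta q)$. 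Moreover $\beta\ge0$, since $V'\ge0$.

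For (3), the main obstacle is well-posedness of \eqref{eq:SDE-X}. The feedback $a^*$ is continuous, because $\beta$ is continuous (as $V\in C^2$) and the projection is $1$-Lipschitz. It is moreover locally Lipschitz wherever $\beta$ is, and $\beta$ is locally Lipschitz if $V\in C^3$ on the region where the coefficient is smooth. The diffusion coefficient is nondegenerate, since $|a^*|\ge\underline a>0$. In one dimension with bounded, uniformly nondegenerate diffusion coefficient and bounded drift, pathwise uniqueness for the reflected SDE follows from a Yamada--Watanabe/Le~Gall type argument, which needs only that $x\mapsto a^*(x)$ have finite quadratic variation type regularity; Lipschitz continuity of $\Pi_{\tilde K}\circ\beta q$ suffices. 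Strong existence then follows via the Skorokhod map and Yamada--Watanabe. We stop the solution at $\tau_d$.

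Verification proceeds as follows. For arbitrary $\alpha\in\mathscr A_{\tilde K}$, we apply It\^o's formula to $e^{-\delta t}V(X^\alpha_t)$ up to $t\wedge\tau_d$. The $\d\bar L$ term vanishes because $V'(0)=0$ and $\bar L$ increases only at $X=0$. The drift is $\ge0$ by the HJB infimum, and the local martingale is a true martingale because $V'$ and $\tilde K$ are bounded. Letting $t\to\infty$ by bounded convergence, and using $V(X_{\tau_d})=1$ on $\{\tau_d<\infty\}$ with the discount killing the remaining term, gives $V(x)\le \mathbb E[e^{-\delta\tau_d}]$. Equality holds for $\alpha^*$, so $V=\bar V$ and $\alpha^*$ is optimal.
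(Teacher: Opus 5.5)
Your overall route is the paper's: homogeneity and normalization for (1), completing the square to get the projection for (2), and Lipschitz feedback plus an It\^o verification for (3). Part (1) and the verification step are fine. Two steps, however, do not go through as written.

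The first is the sign $V''+V'>0$ on $(0,\ell]$. Both the uniqueness of the minimizer and $\beta\ge0$ rest on it, but it does not follow from $\inf_{a\in\tilde K}[\cdots]=\delta V>0$ and $V'\ge0$. If $V''(x)+V'(x)\le0$, the bracket is at most $-(p+q^\top a)V'(x)$. This is strictly positive for every $a\in\tilde K$ as soon as $V'(x)>0$ and $p+q^\top a<0$ on $\tilde K$, and nothing in the theorem's assumptions excludes that regime (take $\mu_B$ large). So pointwise information gives no contradiction. The paper records this fact as Lemma~B.1(iii), but its justification, which lets $q^\top a$ be large, is not available on the compact set $\tilde K$ either.

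What does work is to prove the stronger fact $V''>0$ on $[0,\ell]$, using the same first-zero argument you used for $V'$. Write $V''=\sup_{a\in\tilde K}\phi_a$ with $\phi_a:=\frac{2}{|a|^2}\bigl[\delta V+(p+q^\top a-\frac{|a|^2}{2})V'\bigr]$. At a first zero $x_1>0$ of $V''$ one has $\phi_a(x_1)\le0$ for all $a$. At the same time $\phi_a'(x_1)=2\delta V'(x_1)/|a|^2\ge 2\delta V'(x_1)/\max_{\tilde K}|a|^2>0$, uniformly in $a$. By continuity of $V'$ and $V''$, this gives some $c>0$ with $\phi_a(x)\le -c(x_1-x)$ for all $a$ and all $x<x_1$ close to $x_1$. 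Hence $V''(x)<0$ there, contradicting $V''>0$ on $[0,x_1)$. Then $V''+V'>0$ and $\beta\ge0$ follow from $V'\ge0$.

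The second gap is the regularity of $a^*$ in (3). You only establish continuity of $\beta$, and you make its Lipschitz continuity conditional on $V\in C^3$, which you never prove. Continuity plus nondegeneracy does not by itself give pathwise uniqueness, so your appeal to Le Gall/Yamada--Watanabe is left without its hypothesis. The missing step is the paper's Lemma~B.2. Each $\phi_a$ is $C^1$ on $[0,\ell]$ with $|\phi_a'|\le \frac{2}{\underline a^2}\bigl(\delta\|V'\|_\infty+M\|V''\|_\infty\bigr)$, where $M:=\max_{a\in\tilde K}|p+q^\top a-\frac{|a|^2}{2}|$, uniformly in $a$. Hence $V''=\sup_a\phi_a$ is Lipschitz. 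Since $V''+V'$ is continuous and positive on the compact interval $[0,\ell]$, it is bounded away from zero. So $\beta$, and therefore $a^*=\Pi_{\tilde K}(\beta q)$, are Lipschitz. The standard theory for reflected SDEs with Lipschitz coefficients then gives the unique strong solution directly, with no one-dimensional Le Gall argument needed. Your $C^3$ route can also be completed via Danskin's theorem, because the maximizer of $a\mapsto\phi_a(x)$ is unique (it is $a^*(x)$) once $V''+V'>0$. The Lipschitz argument is shorter.
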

\begin{proof}
See Appendix A.2.
\end{proof}
\begin{rem}
$L^*:=\overline{L^*}-X^*$ solves 
the (path-dependent) SDE,
\begin{equation}
\begin{split}
\d{L}^*_t &= {a}^*\left( \overline{L^*_t} -L^*_t\right)^\top \d W_t 
+\left\{ p+ q^\top {a}^*\left( \overline{L^*_t} -L^*_t\right)
-\frac{1}{2}
\left|{a}^*\left( \overline{L^*_t} -L^*_t\right)\right|^2
\right\}\d t, \\
L^*_0&=\ell_0:=\bar{\ell}_0-x_0.
\end{split}
\label{eq:SDE-L}
\end{equation}
It holds that
\[
\overline{L^*_t}:=\max_{s\in [0,t]}L^*_s \vee \bar{\ell}_0, 
\]
which is seen from the solution of the associated Skorohod equation
(see Lemma 6.14 in Chapter 3 of Karatzas and Shreve \cite{karatzas_brownian_1998}).
\end{rem}


\subsection{Results for Large Drawdown Region}
Next, we introduce the results on the second
auxiliary problem \eqref{eq:LDR} in the large drawdown region.
For introducing the explicit solution of HJB equation 
\eqref{eq:HJB-W}, we consider the quadratic equation
\begin{equation}
f_a(r):=\frac{|a|^2}{2}r^2+
\left\{ p+q^\top a-\frac{|a|^2}{2}\right\}
\,r-\delta=0
\label{eq:quadratic-root}
\end{equation}
with respect to $r$, 
which has a unique positive root, denoted by $r(a)$.
Explicitly,
\begin{equation}
r(a)=\frac{
-\left(p+q^\top a-\frac{|a|^2}{2}\right)
+\sqrt{\left(p+q^\top a-\frac{|a|^2}{2}\right)^2+2\delta |a|^2}
}{|a|^2}.
\label{eq:r-a}
\end{equation}
Because $\tilde K$ is compact and the map $a\mapsto r(a)$ is continuous, 
there exists
$\bar{r}\in {\mathbb R}>0$ and $a^\star\in\tilde K$ such that
\begin{align}
\label{eq:bar-r}
\bar r:&=\min_{a\in\tilde K}r(a)=r(a^\star), \\
\label{eq:a-star}
a^\star\in& \mathop{\rm argmin}_{a\in \tilde{K}} r(a).
\end{align}
We obtain the following.
\begin{thm}\label{thm:hjb-sol-W}
{\rm (1)} The function 
\begin{equation}
    W(x):=e^{-\bar r(x-\ell)}, \quad x\ge \ell, 
    \label{eq:W-explicit}
\end{equation}
with $\bar r$ given by \eqref{eq:bar-r},
solves the HJB equation \eqref{eq:HJB-W}.

\smallskip

\noindent{\rm (2)}
The constant strategy $\alpha^*:=(\alpha^*_t)_{t\in [0,\rho_d]}$ 
given by 
\[
\alpha^*_t:= a^\star
\]
with \eqref{eq:a-star} is optimal for the recovery problem \eqref{eq:LDR}, 
and $W\equiv \bar{W}$ holds.
\end{thm}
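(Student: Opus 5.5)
Part (1) is checked by direct substitution. For $W(x)=e^{-\bar r(x-\ell)}$ we have $W'=-\bar r W$ and $W''=\bar r^2 W$. Hence, for each $a\in\tilde K$,
\[
\frac{|a|^2}{2}W''-\Bigl(p+q^\top a-\frac{|a|^2}{2}\Bigr)W'-\delta W
= W(x)\,f_a(\bar r).
\]
The key observation concerns the sign of $f_a$. Since $|a|\ge\underline a>0$ on $\tilde K$, $f_a$ is a convex parabola with $f_a(0)=-\delta<0$. Its unique positive root is $r(a)$, so $f_a(r)\le 0$ for $0\le r\le r(a)$. Because $\bar r=\min_{\tilde K} r\le r(a)$, we get $f_a(\bar r)\le 0$ for all $a\in\tilde K$, with equality at $a=a^\star$. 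The supremum over $a\in\tilde K$ is therefore $0$ and is attained at $a^\star$, which is the HJB equation in \eqref{eq:HJB-W}. Finally, $W(\ell)=1$, and $W(x)\to0$ as $x\to\infty$ because $\bar r>0$.

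For part (2) I would give a standard verification argument, split into two inequalities.

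\textbf{Upper bound $\bar W\le W$.} Fix $\alpha\in\mathscr A_{\tilde K}$ and $x>\ell$. Started above $\ell$, the process $X^\alpha$ does not reflect before $\rho_d$, as noted in the remark preceding Section 4. Apply It\^o's formula to $e^{-\delta t}W(X^\alpha_t)$ up to $\rho_d\wedge\tau_n\wedge T$, where $\tau_n$ is the exit time from $(\ell,n)$. The drift equals $e^{-\delta t}W(X_t)f_{\alpha_t}(\bar r)\le0$. The stochastic integral has integrand bounded by $\bar r\sup_{\tilde K}|a|$ because $W\le 1$ on $[\ell,\infty)$, so it is a true martingale. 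This gives
\[
\mathbb E\bigl[e^{-\delta(\rho_d\wedge\tau_n\wedge T)}W(X_{\rho_d\wedge\tau_n\wedge T})\bigr]\le W(x).
\]
Let $T\to\infty$ and then $n\to\infty$ using bounded convergence. On $\{\rho_d<\infty\}$ the integrand tends to $e^{-\delta\rho_d}W(\ell)=e^{-\delta\rho_d}$. On $\{\rho_d=\infty\}$ it tends to $0$, because $W\le1$ and $e^{-\delta t}\to0$. Hence $\mathbb E[e^{-\delta\rho_d}]\le W(x)$.

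\textbf{Lower bound and optimality.} Take $\alpha\equiv a^\star$. Then $X$ is a Brownian motion with drift, and the drift term vanishes identically since $f_{a^\star}(\bar r)=0$. Thus $e^{-\delta(t\wedge\rho_d)}W(X_{t\wedge\rho_d})$ is a bounded martingale. Passing to the limit exactly as above gives equality, $\mathbb E[e^{-\delta\rho_d}]=W(x)$, so $a^\star$ is optimal and $W\equiv\bar W$. For $x=\ell$, $\rho_d=0$ and the claim is trivial.

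The main obstacle is the limiting step on the event $\{\rho_d=\infty\}$. The process may drift to infinity, or remain above $\ell$ without hitting it. I would handle this with the discount alone: $e^{-\delta t}W\le e^{-\delta t}\to0$, so no control of the behaviour of $X$ at infinity is needed. A second point to confirm is that $\bar r>0$ and $|a|>0$ on $\tilde K$. Both follow from Assumption~\ref{ass:non-replicability} and compactness of $\tilde K$, and they are what make $r(a)$ well defined and continuous.
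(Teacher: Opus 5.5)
Your proposal is correct and follows the paper's proof essentially step for step: part (1) rests on the sign of $f_a(\bar r)$, deduced from $f_a(0)=-\delta<0$ and $\bar r\le r(a)$, and part (2) is the standard It\^o verification with the nonpositive drift $e^{-\delta t}W(X_t)f_{\alpha_t}(\bar r)$ and equality for the constant control $a^\star$. Your justification that the stochastic integral is a true martingale, and your treatment of the limit on $\{\rho_d=\infty\}$ using $W\le 1$ and the discount factor, supply details that the paper's sketch leaves implicit.
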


\begin{proof}
See Appendix A.3. 
\end{proof}

\subsection{Results for the Main Optimization Problem}

We now state the main result of the paper, which provides a complete
characterization of the value function and the optimal control for
problem \eqref{eq:MP}. In this part, we additionally impose the following.
\begin{ass}
For $\bar{r}$, defined by \eqref{eq:bar-r}, assume 
\[
\bar r<1,
\]
or equivalently, 
\[
p+q^\top a=
r + \frac12 |\lambda|^2 - \mu_B
+(\lambda - \sigma_B)^\top a>\delta
\quad\text{with ${}^\exists a \in \tilde{K}$.}
\]
where we use \eqref{eq:def-lambda} and \eqref{eq:def-pq}.
\end{ass}
\begin{rem}[Interpretation of the condition $\bar{r}<1$]
Recall that $\bar r = \min_{a\in\tilde K} r(a)$, where $r(a)$ is the positive
root of the quadratic equation $f_a(r)=0$, i.e., \eqref{eq:quadratic-root}.
A direct computation shows that
\begin{align*}
\bar r<1
\quad\Leftrightarrow\quad&
f_a(1)>0\quad\text{with ${}^\exists a\in \tilde{K}$} \\
\quad\Leftrightarrow\quad&
p + q^\top a > \delta
\quad\text{with ${}^\exists a\in \tilde{K}$},
\end{align*}
hence, we get the second expression in Assumption 4.2.
The quantity $p + q^\top a$ represents the ``effective drift'' of the
log-relative wealth process with respect to the benchmark.
Hence, the condition $\bar r<1$ means that the investor can construct a
portfolio whose relative growth rate exceeds the discount rate $\delta$.
\end{rem}
\noindent
Let 
\begin{equation}
\label{eq:kappa-star-C-star}
\kappa^\star:=\frac{1}{\delta}
\left\{V_1(\ell)+\frac{V_1'(\ell-)}{\bar r} \right\}^{-1},
\quad
C^\star:=\frac{\kappa^\star V_{1}'(\ell-)}{\bar r},
\end{equation}
where we use the function $x\mapsto V_\kappa(x)=\kappa V_1(x)$ ($\kappa>0$),
the solution of \eqref{eq:HJB-Vk} 
and $\bar{r}>0$ given by \eqref{eq:bar-r}.
Using these constants, define the function
$U:{\mathbb R}_+ \to {\mathbb R}_+$ by
\begin{equation}
U(x)
:=\begin{cases}
\kappa^\star V_{1}(x) & \text{for $0\le x\le \ell$}, \\
-C^\star e^{-\bar r(x-\ell)}+\dfrac{1}{\delta}& 
\text{for $x>\ell$}.
\end{cases}
\label{eq:def-U}
\end{equation}
We then obtain the following.
\begin{thm}\label{thm:hjb-sol-U}
{\rm (1)} The function \eqref{eq:def-U} satisfies
$U\in C^1([0,\infty)) \cap C^2([0,\ell)\cup (\ell,\infty))$
and solves HJB equation \eqref{eq:HJB-U-left}-\eqref{eq:HJB-U-b}.

\smallskip

\noindent{\rm (2)} 
For $x\in [0,\ell) \cup (\ell,\infty)$, 
the following minimizer is uniquely determined as
\begin{align}
a^*(x)
:&=\mathop{\rm argmin}_{a\in\tilde K}
\left[
\frac{|a|^2}{2}U''(x)
-\left(p+q^\top a-\frac{|a|^2}{2}\right)U'(x)
\right] 
=\Pi_{\tilde K}(\beta(x)q),
\label{eq:minimizer-U}
\end{align}
where 
\begin{equation}
\beta(x):=\frac{U'(x)}{U''(x)+U'(x)}
\label{eq:def-beta-U}
\end{equation}
satisfies $\beta(x)\ge 0$. Moreover, 
$a^*(x)=a^\star$ holds for $x\ge \ell$, where we use \eqref{eq:a-star}.

\smallskip

\noindent{\rm (3)}
We extend the function $x\mapsto a^*(x)$ given 
in \eqref{eq:minimizer-U} 
to $a^*: [0,\infty) \to \tilde{K}$ by setting $a^*(\ell):=a^\star$.
Suppose that the reflecting SDE, 
\begin{equation}
\begin{split}
\d{X}^*_t&= -{a}^*\left( X^*_t\right)^\top \d W_t 
-\left\{ p+ q^\top {a}^*\left( X^*_t\right)
-\frac{1}{2}
\left|{a}^*\left( X^*_t\right)\right|^2
\right\}\d t+ \d \overline{L^*_t}, \\
X^*_0&=x_0\in [0,\ell],
\end{split}
\label{eq:SDE-X-global}
\end{equation}
has a unique strong solution that satisfies
\begin{itemize}
 \item[\rm (i)] $X^*\ge 0$, 
 \item[\rm (ii)] ${\overline L^*}$ is adapted, non-decreasing,  
$\overline{L^*_0}=\bar{\ell}_0$,  
and $\displaystyle  \int_0^t {\bf 1}_{\{ X^*_s>0\}} \d \overline{L^*_s}=0$
for  $t\ge 0$.
\end{itemize}
Then, the feedback-form strategy
$(\alpha^*_t)_{t\ge 0}$ defined by
\[
 \alpha^*_t 
:=a^*\left( X^*_t\right)
\quad t\ge 0
\]
is optimal for the problem \eqref{eq:MP} 
and $X^*\equiv X^{\alpha^*}$ holds.
Moreover, $U\equiv \bar{U}$ holds.
\end{thm}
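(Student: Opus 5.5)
We prove the three parts in order, taking Theorems \ref{thm:U-solves-HJB} and \ref{thm:hjb-sol-W} as given.

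\emph{Part (1): regularity and the HJB equation.}
On $[0,\ell]$ we have $U=\kappa^\star V_1$. This is $C^2$ and solves \eqref{eq:HJB-U-left}, because the operator in \eqref{eq:HJB-Vk} is positively homogeneous in $(V,V',V'')$ for $\kappa^\star>0$. Theorem \ref{thm:U-solves-HJB} also gives $U'(0)=0$.

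On $(\ell,\infty)$ we write $U=\frac1\delta-C^\star W$ with $W$ from \eqref{eq:W-explicit}. Then $U'=-C^\star W'$ and $U''=-C^\star W''$, and
\[
-\delta U+1+\inf_a[\cdots U]=C^\star\Big(\delta W-\sup_a[\cdots W]\Big).
\]
So \eqref{eq:HJB-U-right} reduces to \eqref{eq:HJB-W}, provided $C^\star\ge 0$. The condition $\lim_{x\to\infty}U(x)=1/\delta$ is immediate.

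The matching conditions at $\ell$ follow by computation:
\begin{itemize}
\item $U'(\ell+)=\bar r C^\star=\kappa^\star V_1'(\ell-)=U'(\ell-)$.
\item $U(\ell+)=\frac1\delta-C^\star$ equals $\kappa^\star V_1(\ell)$ exactly when $\kappa^\star\{V_1(\ell)+V_1'(\ell)/\bar r\}=1/\delta$, which is the definition in \eqref{eq:kappa-star-C-star}.
\end{itemize}

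We still need $\kappa^\star>0$ and $C^\star\ge0$, i.e.\ $V_1>0$ and $V_1'\ge 0$. I would take these from the Appendix A.1 analysis. There $V_1'$ stays nonnegative: it starts at $V_1'(0)=0$ with $V_1''(0)=2\delta/|a|^2>0$, and at any zero of $V_1'$ the equation forces $V_1''>0$. This gives $V_1'>0$ on $(0,\ell]$, hence $U'>0$ everywhere.

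\emph{Part (2): the minimizer.}
Expand the bracket as
\[
\tfrac12|a|^2(U''+U')-q^\top a\,U'-pU'.
\]
If $U''+U'>0$, completing the square shows that the minimizer over the convex set $\tilde K$ is $\Pi_{\tilde K}\big(\tfrac{U'}{U''+U'}q\big)$, and it is unique by strict convexity. This uses $U'>0$.

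The positivity $U''+U'>0$ is the delicate point.
\begin{itemize}
\item On $[0,\ell)$ it should follow from the Appendix A.1 construction (this is the same claim $\beta\ge 0$ as in Theorem \ref{thm:normalized-V}), so I would import it.
\item On $(\ell,\infty)$ we have $U''+U'=C^\star\bar r(1-\bar r)e^{-\bar r(x-\ell)}$, which is positive precisely by Assumption 4.2 ($\bar r<1$). This is where that assumption enters.
\end{itemize}

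On $(\ell,\infty)$, $\beta=1/(1-\bar r)$ is constant. I would show $\Pi_{\tilde K}(q/(1-\bar r))=a^\star$ as follows. Since $U$ solves \eqref{eq:HJB-U-right} with the exponential form, any minimizer $a$ of the bracket satisfies $f_a(\bar r)\le 0$, hence $r(a)\ge\bar r$ with equality only at minimizers of $r$. Combined with uniqueness of the bracket minimizer and $f_{a^\star}(\bar r)=0$, this identifies the bracket minimizer with $a^\star$.

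\emph{Part (3): verification.}
$U$ is $C^1$, piecewise $C^2$, and $U''$ has bounded one-sided limits at $\ell$. So we may apply the Itô--Tanaka / generalized Itô formula (or mollify) to $e^{-\delta t}U(X^\alpha_t)$ for any admissible $\alpha$. The reflection term contributes $U'(0)\,d\bar L=0$. The stochastic integral is a true martingale because $U'$ is bounded and $\tilde K$ is compact. The HJB inequality then gives
\[
U(x)\le E\Big[\int_0^T e^{-\delta t}\mathbf 1_{\{X_t>\ell\}}\,dt\Big]+e^{-\delta T}E[U(X_T)].
\]
Since $U\le 1/\delta$ is bounded, letting $T\to\infty$ yields $U\le$ the cost of $\alpha$.

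For $\alpha^*$ we get equality, using the assumed strong solution. The set $\{X^*_t=\ell\}$ has zero Lebesgue measure because the diffusion coefficient satisfies $|a^*|\ge\underline a>0$ (occupation-times formula), so the value $a^*(\ell)$ is irrelevant. Hence $U=\bar U$ and $\alpha^*$ is optimal.

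The main obstacle is the sign facts $V_1'\ge0$ and $V_1''+V_1'>0$ on $[0,\ell]$, which underpin both uniqueness of the minimizer and $C^\star\ge 0$. I would take them from the analysis behind Theorem \ref{thm:normalized-V}.
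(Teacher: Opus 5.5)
Your proposal is correct and follows essentially the same route as the paper. You glue $\kappa^\star V_1$ to $\frac1\delta - C^\star e^{-\bar r(x-\ell)}$ via value matching and smooth fit, and you complete the square using $U''+U'>0$, taken from Lemma~\ref{lemma:positivity}(iii) on $[0,\ell)$ exactly as the paper does and from $\bar r<1$ on $(\ell,\infty)$, to get the projection formula and $a^*=a^\star$ beyond $\ell$. You then verify optimality by the generalized It\^o formula, where smooth fit removes the local time at $\ell$ and $U'(0)=0$ removes the reflection term. Two of your steps are small but welcome clarifications of points the paper passes over quickly: you get the martingale property from the boundedness of $U'$ rather than by localization, and you use the occupation-time formula to show that $X^*$ spends Lebesgue-null time at $\ell$.
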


\begin{proof}
See Appendix A.4. 
\end{proof}

\begin{rem}[Strong solvability of SDE \eqref{eq:SDE-X-global}]
\label{rem:wellposed_general}
In general, it seems difficult to establish the existence and 
pathwise uniqueness of a strong solution to the 
reflected SDE \eqref{eq:SDE-X-global} under the present level of generality.
The main reason is that the discontinuity of $x\mapsto a^*(x)$ may exist at the threshold $x=\ell$: more precisely, 
\[
a^*(\ell-)=\Pi_{\tilde K}(\beta(\ell-)q)
\quad\text{and}\quad
a^*(\ell+)=a^\star
\]
need not be aligned in the same direction as (i) $\beta(\ell-)\ne \beta(\ell+)$ in general, and (ii) the projection $\Pi_{\tilde{K}}(\cdot)$ depends on the geometry of the compact and convex set $\tilde K$.
This may destroy pathwise 
uniqueness for multidimensional SDEs with discontinuous diffusion coefficients.
Besides pursuing special cases in which {\it strong} well-posedness can
be proved, another possible approach would be to reformulate the
control problem in a {\it weak} formulation, 
where an admissible control is a system consisting of 
$(\Omega',{\mathcal F}',{\mathbb P}',({\mathcal F}'_t)_{t\ge 0}, W',\alpha)$. 
In such a framework, 
one would work with weak solutions (or, equivalently, the associated martingale problem) rather than insisting on a strong solution driven by a given Brownian motion. 
For the detailed explanation of the weak formulation of stochastic control problem, we refer, e.g., to Fleming and Soner~\cite[p.~135]{fleming_controlled_2006}, Nisio \cite[p.~35]{nisio_stochastic_2015}, and Yong and Zhou \cite[p.~64]{yong_stochastic_1999}, for example. 
\end{rem}

\begin{rem}[On discontinuous feedback strategies: relation to Brinker~\cite{brinker_minimal_2021}]
In portfolio optimization problems arising from mathematical finance,
one typically starts from a prescribed financial market model,
namely a fixed filtered probability space and a given driving Brownian motion.
From this viewpoint, 
the strong formulation is often more natural than the weak one.
Consequently, we think that it is important to understand 
whether the closed-loop reflected SDE admits a unique strong solution, 
even when the optimal feedback coefficient is discontinuous.
A related difficulty arises in Brinker~\cite{brinker_minimal_2021}, 
where the optimal feedback strategy also exhibits a discontinuity at a threshold.
In that setting, the author first establishes the existence of a weak solution
to the controlled SDE using general results for SDEs with discontinuous coefficients.
Next, pathwise uniqueness is obtained by exploiting the specific structure
of the model.
As a consequence, the existence of a unique strong solution follows from
the Yamada--Watanabe theorem.
This shows that discontinuities in the feedback control do not necessarily
preclude strong well-posedness, provided that additional structural properties
are available.
Compared with Brinker~\cite{brinker_minimal_2021}, our setting is more involved due to the geometry
of the set
$\tilde K=\{\sigma^\top x-\sigma_B \mid x\in K\}$
and the benchmark-relative formulation, which may lead to a genuinely
multidimensional and discontinuous diffusion coefficient.
This motivates the search in the next Section~4.4 
for geometric conditions under which
the closed-loop reflected SDE still admits a unique strong solution.
\end{rem}

\subsection{Geometric Structures Ensuring Strong Well-Posedness}

In this section, we investigate several geometric structures 
of the constraint set under which the closed-loop reflected SDE 
\eqref{eq:SDE-X-global} admits a unique strong solution,
despite the discontinuity of the optimal feedback coefficient.
The three examples below illustrate different mechanisms restoring strong
well-posedness.
\begin{itemize}
\item In Proposition~4.1, the projection remains continuous across the threshold,
so that the discontinuity disappears.

\item In Proposition~4.2, the projection is completely pinned to a single boundary point,
which yields a constant optimal feedback coefficient.

\item In Proposition~4.3, the multidimensional discontinuity is reduced to
an effectively one-dimensional structure, allowing the application of
the strong uniqueness theory for one-dimensional SDEs with discontinuous coefficients.
\end{itemize}
Consider a benchmark-relative portfolio constraint of the form
\begin{equation}
K=\{\pi_0 + c v \mid c\in[\underline c,\overline c]\} 
\ \subset {\mathbb R}^n, 
\label{eq:K-def}
\end{equation}
where $\pi_0\in\mathbb R^n$ is a reference portfolio, 
$v\in\mathbb R^n$ represents a fixed active trading direction, 
and $-\infty<\underline{c}<\overline{c}<\infty$.
Then
\[
\tilde K=\{a_0+cu\mid c\in[\underline c,\overline c]\}
\ \subset {\mathbb R}^d, 
\]
where 
\begin{equation}
a_0:=\sigma^\top \pi_0 - \sigma_B,
\quad u:=\sigma^\top v.
\label{eq:def-a0u} 
\end{equation}
The term $a_0$ represents a baseline mismatch between 
the attainable volatility $\sigma^\top \pi_0$
and the benchmark volatility $\sigma_B$, 
and the term $u$ describes the active exposure direction. 
In this case, the optimal feedback coefficient
is explicitely expressed as
\begin{equation}
a^*(x):=\Pi_{\tilde K}(\beta(x)q)
=a_0+\gamma(x)u
\label{eq:astar-explicit}
\end{equation}
with the scalar function 
$\gamma: {\mathbb R}_+ \to [\underline c,\overline c]$, given by
\begin{align*}
\gamma(x)
:=\Pi_{[\underline c,\overline c]}
\left(\frac{(\beta(x)q-a_0)^\top u}{|u|^2}\right) 
=\min\left\{ \overline{c}, 
\max\left\{ \underline{c}, \beta(x) m-\eta\right\}\right\},
\end{align*}
where $\Pi_{[\underline c,\overline c]}$
represents the metric projection on the closed interval
$[\underline c,\overline c]$ and
\begin{align}
m:=&\frac{q^\top u}{|u|^2}=
\frac{q^\top \sigma^\top v}{\left|\sigma^\top v\right|^2}, 
\label{eq:def-m} \\
\eta:=&\frac{a_0^\top u}{|u|^2}
=\frac{\left( \pi^\top_0\sigma-\sigma_B^\top\right)\sigma^\top v}
{\left| \sigma^\top v\right|^2}.
\label{eq:def-eta}
\end{align}
We see the following.
\begin{prop}[Continuous Projection Across the Threshold]
\label{prop:continuity_threshold}
Assume that 
\begin{equation}
p+q^\top a-\frac{|a|^2}{2}>0
\quad\text{with some $a\in \tilde{K}$.}
\label{eq:suff-drift}
\end{equation}
Then, the following are valid.
\begin{itemize}
 \item[\rm (1)] The function $U$ defined by \eqref{eq:def-U}
satisfies $U''(x):=\kappa^\star V_1''(x)>0$ for all $x\in (0,\ell)$, i.e., 
$U$ is strictly convex on $(0,\ell)$. 
Consequently, it holds that
\[
0<\beta(x)<1 \quad \text{for }{}^\forall x\in(0,\ell)
\quad\text{and}\quad
\beta(x)=\frac{1}{1-\bar r}
\quad \text{for }{}^\forall x\ge \ell.
\]
This identifies 
the effective range of the projection parameter function $x\mapsto \beta(x)$ given by~\eqref{eq:def-beta-U}.
 \item[\rm (2)] Suppose one of the following conditions holds:
\begin{align*}
{\rm (a)}&\quad
\beta(\ell-)m-\eta\ge \overline{c} \quad\text{and}\quad
\frac{m}{1-\bar{r}}-\eta \ge \overline{c}, \\
{\rm (b)}&\quad
\beta(\ell-)m-\eta\le \underline{c} \quad\text{and}\quad
\frac{m}{1-\bar{r}}-\eta\le \underline{c},
\end{align*}
where we use \eqref{eq:def-m} and \eqref{eq:def-eta}.
These conditions guarantee that the projection
\[
\Pi_{\tilde K}(\beta(x)q)
\]
remains on the same endpoint of the interval constraint 
across the threshold $x=\ell$.
More precisely,
\begin{itemize}
\item under condition {\rm (a)},
$a^*(\ell-)=a^*(\ell+)=a_0+\overline c\,u$,
\item under condition {\rm (b)},
$a^*(\ell-)=a^*(\ell+)=a_0+\underline c\,u$.
\end{itemize}
Hence, the optimal feedback coefficient $x\mapsto a^*(x)$ 
given in Theorem 4.4 (3) is continuous on
$[0,\infty)$.
In particular, the SDE 
\eqref{eq:SDE-X-global} admits a unique strong solution.
\end{itemize}
\end{prop}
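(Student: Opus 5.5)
The plan has two parts: a convexity argument for $V_1$ based on the HJB equation \eqref{eq:HJB-Vk}, and then a continuity-plus-regularity argument for $x\mapsto a^*(x)$ using the explicit interval structure \eqref{eq:astar-explicit}.

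\emph{Step 1: convexity.} Since $U=\kappa^\star V_1$ on $[0,\ell]$, it suffices to show $V_1''>0$ on $(0,\ell)$ and $\kappa^\star>0$. At $x=0$ we have $V_1'(0)=0$, so \eqref{eq:HJB-Vk} reduces to $\delta V_1(0)=\inf_{a\in\tilde K}\frac{|a|^2}{2}V_1''(0)$. Because $V_1(0)=1>0$, this forces $V_1''(0)>0$, and hence $\delta=\frac{\underline a^2}{2}V_1''(0)$.

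Now suppose $x_0\in(0,\ell)$ is the first zero of $V_1''$. On $(0,x_0)$ we have $V_1''>0$, so $V_1'$ increases from $0$ and $V_1'(x_0)>0$; also $V_1(x_0)\ge 1$. At $x_0$ the HJB equation becomes
\[
-\delta V_1(x_0)-\sup_{a\in\tilde K}\Bigl(p+q^\top a-\tfrac{|a|^2}{2}\Bigr)V_1'(x_0)=0 .
\]
By \eqref{eq:suff-drift} the supremum is strictly positive, so the left-hand side is strictly negative, which is a contradiction. Hence $V_1''>0$ and $V_1'>0$ on $(0,\ell]$. In particular $V_1(\ell)+V_1'(\ell-)/\bar r>0$, so $\kappa^\star>0$ and $C^\star>0$.

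Consequently, on $(0,\ell)$ both $U'>0$ and $U''>0$, which gives $0<\beta<1$. For $x>\ell$ we have $U'=C^\star\bar r e^{-\bar r(x-\ell)}$ and $U''=-\bar r\,U'$. Therefore
\[
\beta=\frac{1}{1-\bar r},
\]
which is well defined and positive by Assumption 4.2. At $x=\ell$ itself we use the one-sided limits, and $\beta(\ell-)$ exists because $V_1\in C^2([0,\ell])$.

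\emph{Step 2: same endpoint across $\ell$.} By \eqref{eq:astar-explicit}, $\gamma(x)$ is the clipping of $\beta(x)m-\eta$ to $[\underline c,\overline c]$.

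Under (a), $\beta(x)m-\eta=\frac{m}{1-\bar r}-\eta\ge\overline c$ for every $x>\ell$, so $\gamma\equiv\overline c$ there. Theorem 4.4(2) then gives $a^\star=a_0+\overline c\,u$, which is consistent with the extension $a^*(\ell):=a^\star$. From the left, $\gamma(\ell-)=\Pi_{[\underline c,\overline c]}(\beta(\ell-)m-\eta)=\overline c$. Case (b) is symmetric with $\underline c$ in place of $\overline c$.

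On $[0,\ell)$, $\beta$ is continuous because $V_1\in C^2$ and $V_1''+V_1'>0$. The projection is $1$-Lipschitz, so $a^*$ is continuous on $[0,\ell)$. Combined with the matching at $\ell$ and constancy on $[\ell,\infty)$, $a^*$ is continuous on $[0,\infty)$.

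\emph{Step 3: strong well-posedness.} This is where I expect the main obstacle: continuity of $a^*$ alone does not give pathwise uniqueness, so I need local Lipschitz regularity of $\beta$ on $[0,\ell]$.

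My first attempt will be an envelope argument on \eqref{eq:HJB-Vk}. The minimizer is unique (Theorem 4.2(2)) and depends continuously on $(V_1',V_1'')$. Moreover $\inf_a\frac{|a|^2}{2}V_1''\ge\frac{\underline a^2}{2}V_1''$ with $\underline a>0$, so the equation can be solved for $V_1''$ as a locally Lipschitz function $G(V_1,V_1')$. This gives $V_1\in C^3([0,\ell])$, hence $\beta\in C^1([0,\ell])$. Then $a^*$ is Lipschitz on $[0,\ell]$, constant on $[\ell,\infty)$, and continuous at $\ell$, so it is globally Lipschitz and bounded, since $\tilde K$ is compact.

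The coefficients of \eqref{eq:SDE-X-global} are then bounded and Lipschitz. The Skorokhod map at $0$ is Lipschitz in the sup norm, so a Picard iteration (as in Lions--Sznitman or Tanaka) yields a unique strong solution.

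If the Lipschitz regularity turns out to be delicate, the fallback is to reduce dimension. The process $X^*$ is one-dimensional with diffusion coefficient of norm $|a^*(x)|\ge\underline a>0$. One can therefore write the noise as $\int\frac{-a^*(X_s)^\top}{|a^*(X_s)|}\d W_s$, a scalar Brownian motion, and appeal to one-dimensional Yamada--Watanabe / Nakao-type uniqueness for nondegenerate, bounded-variation diffusion coefficients (with reflection). Since $a^*$ is continuous and piecewise monotone in $\beta$, this would again give pathwise uniqueness and hence strong existence.
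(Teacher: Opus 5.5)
Your proposal is correct and follows essentially the same route as the paper. Convexity of $V_1$ comes from the HJB equation combined with \eqref{eq:suff-drift} and positivity of $V_1,V_1'$; you obtain that positivity self-containedly by a first-zero argument rather than citing Lemma~\ref{lemma:positivity}. Then comes the common-endpoint identification under (a)/(b), and finally Lipschitz continuity of $\beta$ on $[0,\ell]$ (the paper's Lemma~B.2, which you recover from $V_1''=\mathcal F(V_1,V_1')$), so that $a^*$ is globally Lipschitz and standard reflected-SDE theory applies.

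Your fallback is unnecessary and would not work as stated. The scalar Brownian motion $\int_0^\cdot -a^*(X_s)^\top/|a^*(X_s)|\,\mathrm{d}W_s$ depends on the solution whenever the direction of $a^*(x)=a_0+\gamma(x)u$ varies, so one-dimensional pathwise uniqueness does not transfer to the equation driven by the original $W$. This is exactly why Proposition~\ref{prop:effective_1d_projection} requires alignment with a fixed direction.
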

\begin{proof}
See Appendix A.5.
\end{proof}
\begin{rem}
(1) The condition \eqref{eq:suff-drift} 
means that there exists an admissible benchmark-relative 
volatility exposure for which 
the relative log-growth is positive. 
See \eqref{eq:L}, \eqref{eq:def-pq}, and the following explanation.
\medskip

\noindent{(2)} 
Recalling \eqref{eq:def-m}, 
the sign of $m$ determines how the direction $u$ aligns with 
the benchmark-relative risk premium $q$.
\begin{itemize}
\item If $m>0$, the direction $u$ is positively aligned with $q$, meaning that taking risk in this direction yields a positive benchmark-relative risk premium. In this case, the investor is driven toward the upper bound $\overline c$.

\item If $m<0$, the direction $u$ is negatively aligned with $q$, and taking risk in this direction reduces benchmark-relative performance. The investor is therefore pushed toward the lower bound $\underline c$.

\item If $m=0$, the direction $u$ is orthogonal to $q$, and hence does not contribute to benchmark-relative growth. In this case, the feedback coefficient is constant and no discontinuity arises.
\end{itemize}
\end{rem}

The following case can be viewed as an extreme form 
of the boundary-binding behavior in Proposition 4.1, 
where the projection remains on the same face 
of the constraint set for all values of the state variable.
\begin{prop}[Complete Boundary Binding]
\label{prop:complete_binding}
Suppose that there exists a point $a^\dagger \in \tilde K$ such that
\begin{equation}
\label{eq:comp_bind_proj}
\Pi_{\tilde K}(tq)=a^\dagger
\quad \text{for all } t\in \left[0,\frac{1}{1-\bar r}\right].
\end{equation}
Then the optimal feedback coefficient is constant:
\[
a^*(x)\equiv a^\dagger,
\qquad x\ge 0.
\]
In particular, the closed-loop reflected SDE \eqref{eq:SDE-X-global} 
admits a unique strong solution.
\end{prop}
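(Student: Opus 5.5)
The plan is to show that the projection parameter $\beta(x)$ defined in \eqref{eq:def-beta-U} takes values in $[0,\frac{1}{1-\bar r}]$ for every $x\ge 0$. Then \eqref{eq:comp_bind_proj} forces $a^*(x)=\Pi_{\tilde K}(\beta(x)q)=a^\dagger$ pointwise. The closed-loop equation \eqref{eq:SDE-X-global} then becomes a reflected SDE with constant coefficients, and strong well-posedness follows from the Skorohod construction.

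\textbf{Step 1: $\beta\ge 0$ on $[0,\ell)$.} By Theorem~\ref{thm:hjb-sol-U}~(2), $\beta(x)\ge 0$ on $[0,\ell)\cup(\ell,\infty)$.

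\textbf{Step 2: $\beta$ on $(\ell,\infty)$.} Here we compute directly from \eqref{eq:def-U}. We have $U'(x)=C^\star\bar r e^{-\bar r(x-\ell)}$ and $U''(x)=-C^\star\bar r^2e^{-\bar r(x-\ell)}$. Hence $\beta(x)=\frac{1}{1-\bar r}$, using $\bar r<1$ from Assumption~4.2. This is consistent with Theorem~\ref{thm:hjb-sol-U}~(2), which gives $a^*(x)=a^\star$ for $x\ge\ell$. Under \eqref{eq:comp_bind_proj} this yields $a^\star=\Pi_{\tilde K}(\frac{1}{1-\bar r}q)=a^\dagger$. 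The convention $a^*(\ell):=a^\star$ is therefore also equal to $a^\dagger$.

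\textbf{Step 3: $\beta\le \frac{1}{1-\bar r}$ on $[0,\ell)$.} This is the main obstacle. I would avoid needing a sharp bound by exploiting the structure of the projection instead. For $t\ge 0$, the function $t\mapsto\Pi_{\tilde K}(tq)$ is continuous. If it equals $a^\dagger$ on $[0,T]$ with $T=\frac{1}{1-\bar r}$, I would try to show the identity extends to all $t\ge 0$. The key observation is that $\Pi_{\tilde K}(0)=a^\dagger$ and $\Pi_{\tilde K}(Tq)=a^\dagger$ together imply that both $-a^\dagger$ and $Tq-a^\dagger$ lie in the normal cone $N_{\tilde K}(a^\dagger)$. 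The normal cone is a convex cone, so $tq-a^\dagger=\frac{t}{T}(Tq-a^\dagger)+(1-\frac{t}{T})(-a^\dagger)$ belongs to it whenever $t\ge T$, since both coefficients are nonnegative there. Hence $\Pi_{\tilde K}(tq)=a^\dagger$ for all $t\ge0$. With this extension, Step 1 alone suffices, and no upper bound on $\beta$ in the small drawdown region is needed. If the bound $\beta<1<\frac{1}{1-\bar r}$ from Proposition~\ref{prop:continuity_threshold}~(1) is available under \eqref{eq:suff-drift}, it gives an alternative route, but the cone argument is cleaner and unconditional. At $x=0$, where $U'(0)=0$, $\beta(0)=0$ is covered as well, provided $U''(0)\neq0$. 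If instead $U''(0)=0$, $\beta(0)$ is undefined and I would handle it by continuity of $a^*$, or simply note that the minimizer at that point is taken to be $\Pi_{\tilde K}(0\cdot q)=a^\dagger$.

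\textbf{Step 4: Conclusion.} With $a^*\equiv a^\dagger$, equation \eqref{eq:SDE-X-global} reads $\d X_t=-a^{\dagger\top}\d W_t-c\,\d t+\d\overline{L^*_t}$, where $c=p+q^\top a^\dagger-\frac12|a^\dagger|^2$ is a constant. This is a Brownian motion with drift reflected at $0$. By the Skorohod lemma (Karatzas--Shreve, Lemma~3.6.14), the pair $(X^*,\overline{L^*})$ is given explicitly and pathwise as a functional of $W$, which yields existence and pathwise uniqueness of a strong solution.
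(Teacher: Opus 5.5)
Your Step 3 has a sign error, and the extension it claims is false. In
\[
tq-a^\dagger=\tfrac{t}{T}\,(Tq-a^\dagger)+\bigl(1-\tfrac{t}{T}\bigr)(-a^\dagger),\qquad T=\tfrac{1}{1-\bar r},
\]
the coefficient $1-t/T$ is \emph{negative} for $t>T$. The cone property therefore only yields $tq-a^\dagger\in N_{\tilde K}(a^\dagger)$ for $t\in[0,T]$, which is the hypothesis itself. The extension fails in general. The set of $t\ge 0$ with $\Pi_{\tilde K}(tq)=a^\dagger$ is the intersection of the ray with the convex set $a^\dagger+N_{\tilde K}(a^\dagger)$, and this intersection can be a bounded interval. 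For a concrete example, take $d=2$ and let $\tilde K$ be the triangle with vertices $(1,0)^\top,(2,1)^\top,(2,-1)^\top$; it is compact, convex, and $0\notin\tilde K$. Take $a^\dagger=(1,0)^\top$ and $q=(0,1/T)^\top$. Then $a^\dagger+N_{\tilde K}(a^\dagger)=\{(y_1,y_2)^\top : y_1+|y_2|\le 1\}$, so $\Pi_{\tilde K}(tq)=a^\dagger$ holds exactly for $t\in[0,T]$ and fails for $t>T$.

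Consequently, Step 1 alone does not suffice. You genuinely need the upper bound $\beta(x)\le\frac{1}{1-\bar r}$ on $[0,\ell)$. The paper's one-line proof takes this from Proposition~\ref{prop:continuity_threshold}~(1), which gives $0<\beta<1$ there. You are right that that route implicitly uses \eqref{eq:suff-drift}, which is not among the hypotheses here, but the bound can be obtained directly. On $[0,\ell)$ the HJB equation gives $U''\ge\frac{2}{|a|^2}\bigl[\delta U+(p+q^\top a-\frac{|a|^2}{2})U'\bigr]$ for every $a\in\tilde K$. Choose $a=a^\star$ and use $f_{a^\star}(\bar r)=0$, i.e.\ $p+q^\top a^\star-\frac{|a^\star|^2}{2}=\frac{\delta}{\bar r}-\frac{|a^\star|^2}{2}\bar r$. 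This gives
\[
U''+\bar r\,U'\;\ge\;\frac{2\delta}{|a^\star|^2}\Bigl(U+\frac{U'}{\bar r}\Bigr)\;>\;0,
\]
where positivity follows from $U>0$ and $U'\ge 0$ (Lemma~\ref{lemma:positivity}). Hence $U''+U'>(1-\bar r)U'\ge 0$, and therefore $0\le\beta<\frac{1}{1-\bar r}$ on $[0,\ell)$. At the left endpoint, $\beta(0)=0$ because $U''(0)>0$, so your hedging about $U''(0)=0$ is unnecessary. With this replacing your Step 3, your Steps 1, 2 and 4 go through as written.
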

\begin{proof}
By Proposition 4.1 (1), 
Since $\beta(x)\in [0,1/(1-\bar r)]$, the assumption implies
$a^*(x)=\Pi_{\tilde K}(\beta(x)q)=a^\dagger$ for all $x\ge 0$.
\end{proof}

\begin{rem}[Geometric Interpretation]
\label{rem:complete_binding_geometry}
The condition in Proposition \ref{prop:complete_binding} admits a natural geometric interpretation.
Indeed, the relation 
\eqref{eq:comp_bind_proj}
means that the entire ray
$\{tq\mid t\ge 0\}$
remains inside the normal cone of $\tilde K$ at the point $a^\dagger$.
Equivalently, the projection of the target signal $\beta(x)q$
onto the admissible set $\tilde K$
always selects the same point $a^\dagger$.
Hence, Proposition \ref{prop:complete_binding}
can be viewed as an extreme form of the boundary-binding behavior
described in Proposition \ref{prop:continuity_threshold},
where the projection remains on the same face of the constraint set
for all values of the state variable.
\end{rem}

\begin{rem}[Portfolio Management Interpretation]
\label{rem:complete_binding_finance}
The complete binding regime in Proposition \ref{prop:complete_binding}
also admits a natural interpretation from the viewpoint of portfolio management.
Recall that
\[
\tilde K
=
\{\sigma^\top x-\sigma_B \mid x\in K\},
\]
where $K$ denotes the original portfolio constraint set.
If there exists $\pi^\dagger\in K$ such that
\[
a^\dagger
=
\sigma^\top \pi^\dagger-\sigma_B,
\]
then Proposition \ref{prop:complete_binding}
states that the optimal benchmark-relative volatility exposure
is constantly given by $a^\dagger$,
independently of the state variable.
This situation may arise when the admissible set $K$
is sufficiently restrictive,
for instance under strong position limits,
mandate constraints,
or concentrated allocation rules,
so that the optimal benchmark-relative portfolio
always remains on the same boundary point of the feasible region.
Typical examples include box constraints,
simplex-type long-only constraints,
and convex combinations of finitely many prescribed model portfolios.
\end{rem}
\noindent
Further, we obtain the following, 
which admits strong existence and uniqueness of the optimal 
state SDE~\eqref{eq:SDE-X-global}, 
even though the coefficient $x\mapsto a^*(x)$ may be discontinuous. 
\begin{prop}[Effective $1$-dimensional Projection Structure]
\label{prop:effective_1d_projection}
Assume that there exist a nonzero vector $u\in\mathbb R^d$,
a compact interval $I\subset \mathbb R$,
and a compact convex set 
$0\in C\subset u^\perp:=
\left\{ v\in {\mathbb R}^d \ \middle| \ u^\top v=0\right\}$
such that
\begin{equation}
\label{eq:effective_const}
\tilde{K} =\{cu+z \mid c\in I,\ z\in C\}. 
\end{equation}
Assume further that
\[
q\in \operatorname{span}\{u\}
\quad\text{and}\quad
0\notin I.
\]
Then, for every $t\ge 0$,
\begin{equation}
\label{eq:prop4.3proj}
\Pi_{\tilde K}(tq)
=\Pi_I\!\left(
t\frac{q^\top u}{|u|^2}
\right)u.
\end{equation}
Consequently, the optimal feedback coefficient admits the representation
\begin{equation}
\label{eq:prop4.3opt} 
a^*(x)=\gamma^*(x)\,u,
\quad x\ge 0,
\end{equation}
where
\begin{equation}
\label{eq:prop4.3gamma} 
\gamma^*(x)
=
\Pi_I\!\left(
\beta(x)\frac{q^\top u}{|u|^2}
\right).
\end{equation}
In particular, the diffusion coefficient of the closed-loop reflected
SDE \eqref{eq:SDE-X-global} 
remains aligned with the fixed direction $u$.
Hence, the reflected SDE \eqref{eq:SDE-X-global} 
admits a unique strong solution.
\end{prop}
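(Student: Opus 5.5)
The proof has three steps: compute the projection using the product structure of $\tilde K$, substitute $t=\beta(x)$, and then reduce the closed-loop equation to a scalar reflected SDE with a discontinuous drift and a nondegenerate, constant-direction diffusion.

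\textbf{Projection.} Write $q=\theta u$ with $\theta=q^\top u/|u|^2$, so $tq=t\theta u$. For $a=cu+z$ with $c\in I$ and $z\in C\subset u^\perp$, orthogonality gives
\[
|tq-a|^2=(t\theta-c)^2|u|^2+|z|^2 .
\]
The two terms separate, so the minimizer over $c\in I$ and $z\in C$ is $c=\Pi_I(t\theta)$ and $z=0$. The choice $z=0$ is feasible because $0\in C$. This yields \eqref{eq:prop4.3proj}. Taking $t=\beta(x)\ge0$, which Theorem 4.4(2) allows, gives \eqref{eq:prop4.3opt}--\eqref{eq:prop4.3gamma} for $x\neq\ell$.

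At $x=\ell$ the definition $a^*(\ell)=a^\star$ must also be of this form. Theorem 4.4(2) shows $a^\star=a^*(\ell+)=\Pi_{\tilde K}(\beta(\ell+)q)$, which is again $\gamma^*(\ell)u$ with $\gamma^*(\ell):=\Pi_I(\beta(\ell+)\theta)$. Hence $a^*(x)=\gamma^*(x)u$ for every $x\ge0$, where $\gamma^*$ is bounded, Borel, and takes values in $I$. Because $I$ is compact and $0\notin I$, we have $|\gamma^*(x)|\ge \underline\gamma:=\min_{c\in I}|c|>0$.

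\textbf{Reduction to one dimension.} Set $B_t:=u^\top W_t/|u|$, a standard one-dimensional Brownian motion. Then $a^*(X_t)^\top \d W_t=\gamma^*(X_t)|u|\,\d B_t$, and the closed-loop SDE \eqref{eq:SDE-X-global} becomes
\[
\d X_t=\sigma_1(X_t)\,\d B_t+b_1(X_t)\,\d t+\d\overline{L_t},
\]
with $\sigma_1(x)=-|u|\gamma^*(x)$ and $b_1(x)=-(p+\theta|u|^2\gamma^*(x)-\tfrac12\gamma^*(x)^2|u|^2)$, both bounded and Borel. Since $\gamma^*$ does not change sign, $\sigma_1^2\ge |u|^2\underline\gamma^2>0$, so the diffusion is uniformly nondegenerate. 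A solution of the scalar equation driven by $B$ is a strong solution of the original equation driven by $W$, and pathwise uniqueness transfers back, because the equation only sees $W$ through $B$.

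\textbf{Main obstacle: strong well-posedness of the scalar reflected SDE with discontinuous coefficients.} I would follow the weak-existence-plus-pathwise-uniqueness route of Brinker and then apply Yamada--Watanabe. For pathwise uniqueness, Le Gall's criterion requires $\sigma_1$ to be of bounded variation (or, alternatively, $\sigma_1^2$ bounded below with $\sigma_1$ of finite quadratic-type variation). This holds here because $\beta$ is continuous on $[0,\ell)$ and constant on $(\ell,\infty)$ by Proposition 4.1(1), $\Pi_I$ is $1$-Lipschitz, and there is a single jump at $\ell$. To justify bounded variation on $[0,\ell)$, I would verify that $\beta$ is $C^1$ up to $\ell$, which follows from $V_1\in C^2$ together with the HJB equation; monotonicity of $\beta$ would also suffice.

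The reflection at $0$ is handled by the Skorokhod map, following Karatzas--Shreve as cited. Given two solutions, I would apply the Le Gall local-time argument to $X^1-X^2$, noting that the terms $\d\overline{L^i}$ act only at $0$ and have the favourable sign for Tanaka's formula applied to $(X^1-X^2)^+$. Alternatively, I would remove the drift by a Zvonkin-type transform, reducing the problem to a reflected equation with bounded-variation diffusion. Weak existence follows from nondegeneracy and boundedness, for example via a random time change of reflected Brownian motion or Girsanov. This is the step requiring the most care.
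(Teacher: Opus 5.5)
Your proposal is correct and follows essentially the same route as the paper: the orthogonal splitting $|tq-a|^2=(t\theta-c)^2|u|^2+|z|^2$ with $z=0\in C$, the scalar Brownian motion $u^\top W/|u|$, uniform nondegeneracy from $0\notin I$, and then weak existence, Le Gall-type pathwise uniqueness and Yamada--Watanabe.

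You are in fact slightly more careful than the paper in two places. You check that $a^*(\ell)=a^\star=\Pi_I\bigl(\theta/(1-\bar r)\bigr)u$ is also of the form $\gamma^*(\ell)u$, and you note that $\gamma^*$ has constant sign. Two small points need fixing: for bounded variation of $\beta$ on $[0,\ell]$ you can simply invoke the Lipschitz continuity in Lemma B.2 rather than an unproved $C^1$ claim, and the facts about $\beta$ should be cited from Theorem 4.4(2) and Appendix B rather than from Proposition 4.1(1), whose extra drift hypothesis is not assumed here.
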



\begin{proof}
See Appendix A.6.
\end{proof}

\begin{rem}[Geometric Interpretation]
\label{rem:effective_1d_geometry}
Proposition \ref{prop:effective_1d_projection}
shows that the effective dimensionality of the projection problem
may be strictly smaller than the dimension of the admissible set $\tilde K$.
Indeed, although $\tilde K$ itself may be genuinely high-dimensional,
the conditions, \eqref{eq:effective_const} and $q\in \operatorname{span}\{u\}$,
imply that the orthogonal component $z\in C$
does not affect the projection of the target signal $tq$.
More precisely, the orthogonal decomposition
\[
|tq-(cu+z)|^2
=
|tq-cu|^2+|z|^2
\]
shows that the projection problem reduces to a one-dimensional optimization
along the direction $u$.
Consequently, the projection image remains aligned with the same direction $u$,
even though the admissible set $\tilde K$ is high-dimensional.
\end{rem}

\begin{rem}[Portfolio Management Interpretation]
\label{rem:effective_1d_finance}
The structure in Proposition \ref{prop:effective_1d_projection}
admits a natural interpretation in terms of portfolio management.
Recall that
\[
\tilde K
=
\{\sigma^\top x-\sigma_B \mid x\in K\},
\]
where $K$ denotes the original portfolio constraint set.
The condition \eqref{eq:effective_const}
means that the admissible benchmark-relative volatility exposures
consist of
\begin{itemize}
\item a one-dimensional effective risk direction $u$, and
\item additional directions orthogonal to $u$ that do not contribute
to the benchmark-relative risk premium.
\end{itemize}
Since the benchmark-relative risk premium vector satisfies
$q\in \operatorname{span}\{u\}$,
the optimization depends only on the scalar exposure along the direction $u$.
The orthogonal component $C$ acts merely as a risk-neutral slack variable
from the viewpoint of benchmark-relative performance.
From a financial perspective, this corresponds to a market environment
in which only a single effective factor drives benchmark-relative growth,
even though the admissible portfolio set itself may be high-dimensional.
\end{rem}
\section{Numerical Experiments}

The numerical experiments involve two correlated risky assets (correlation $\varrho=0.54$). 
The benchmark is the equal-weighted portfolio of both assets, while the investor may trade only Asset~1 and the risk-free asset. 
The figure shows the benchmark-relative drawdown averaged over $5000$ simulated paths, comparing the optimal drawdown-duration strategy with a constant $100\%$-Asset~1 (i.e., buy-and-hold) strategy.
We use the following parameters: 
\begin{itemize}
  \item Risk-free rate $r = 0.01$
  \item Volatility matrix and drift vector
        \[
          \sigma =
          \begin{pmatrix} 0.18 & 0 \\ 0.09 & 0.14 \end{pmatrix},
          \qquad
          \mu = (0.09,\, 0.08)
        \]
  \item Benchmark: equal-weighted portfolio
  \item Discount rate $\delta = 0.03$
  \item Critical drawdown barrier
        $\ell = -\log(0.8) \approx 0.223$
        (a $20\%$ benchmark-relative drawdown)
  \item Weight constraint $\pi \in [0,\,2]$ for Asset~1
  \item Time horizon $24$ years, weekly rebalancing,
        $5000$ paths
\end{itemize}

Figure~\ref{fig:underwater_curves} plots the interquartile range, median and mean of underwater curves from the Monte Carlo simulation. 
Note that the median dominates the mean, which is due to the occurrence of prolonged and deep drawdowns. 
For the observed time period and random seed, the 25th percentile of paths does not breach the critical drawdown level. 

\begin{figure}[htbp!]
    \centering
    \includegraphics[width=1\linewidth]{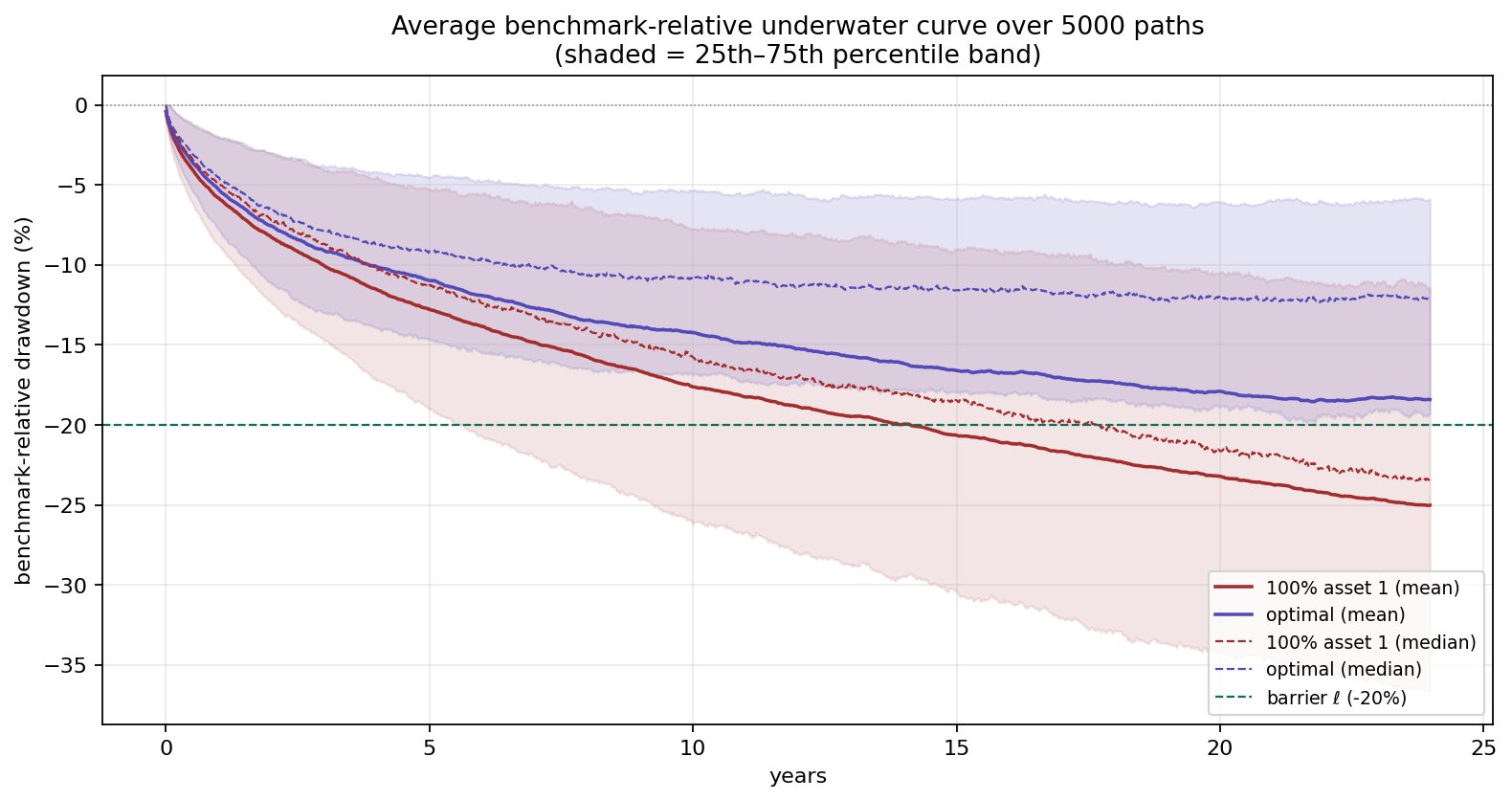}
    \caption{Underwater curves with interquartile ranges}
    \label{fig:underwater_curves}
\end{figure}

In Figure~\ref{fig:ecdf_crit_dd_time}, we show the empirical cumulative distribution functions for the time that the optimal strategy spends in critical drawdown. 
In the top-left panel ($d=10\%$), about 70\% of paths under the optimal strategy spend at most 40\% of the total time in critical drawdown, whereas for the buy-and-hold strategy, the corresponding proportion of paths is below 20\%. 

These numerical experiments illustrate the theoretical findings.
Indeed, rather than seeking to outperform a stochastic benchmark, the simulations indicate that the strategy mitigates sustained relative underperformance, as reflected in shorter durations of benchmark-relative critical drawdowns.

\begin{figure}[htbp!]
    \centering
    \includegraphics[width=1\linewidth]{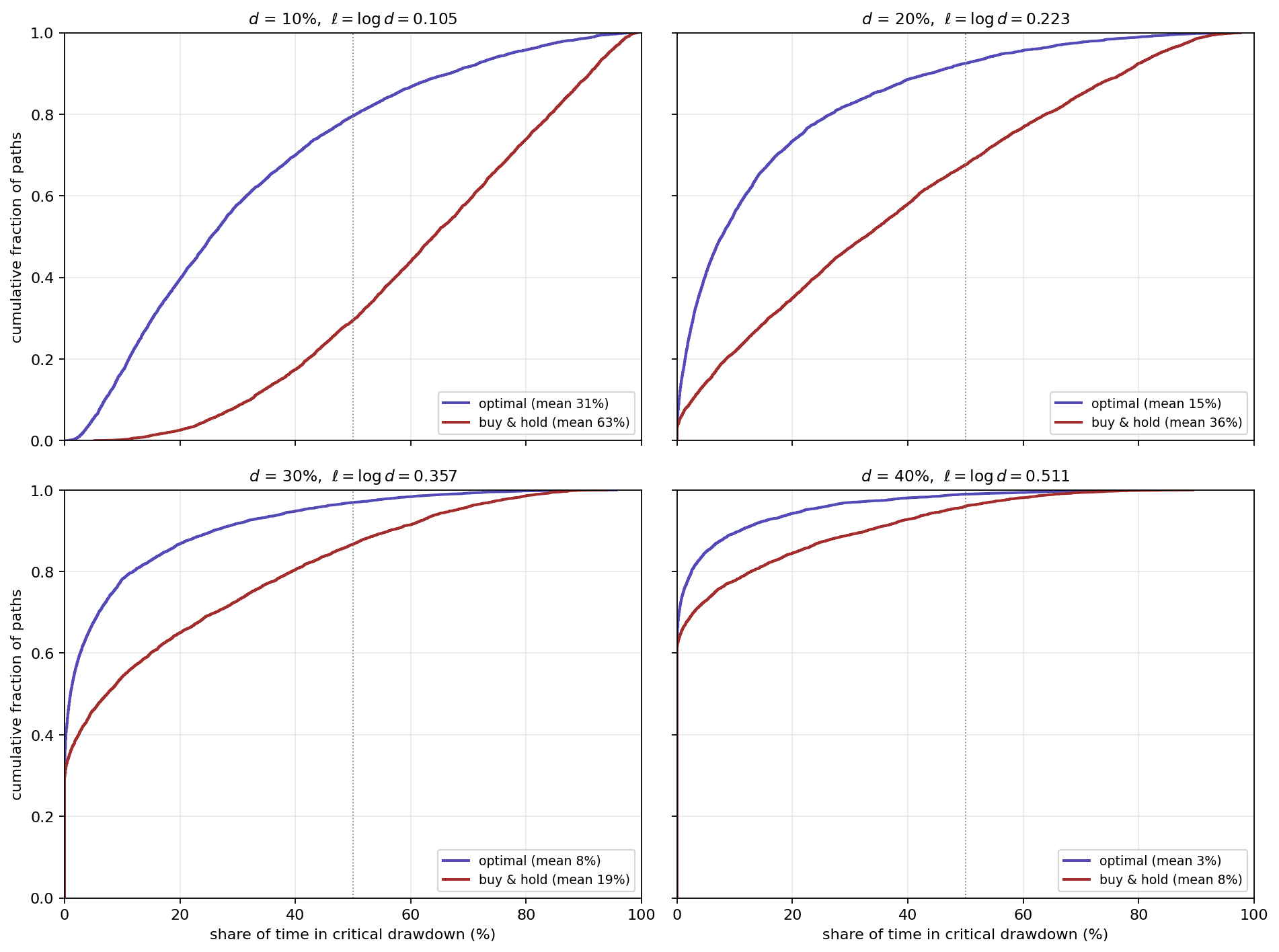}
    \caption{Empirical CDF of time in drawdown for different critical drawdown levels}
    \label{fig:ecdf_crit_dd_time}
\end{figure}


\appendix
\newpage
\section{Proofs}
\subsection{Proof of Theorem~\ref{thm:U-solves-HJB}}





\begin{proof}
Fix $\kappa>0$. 
Recall that $\tilde K\subset\mathbb R^d$ is nonempty, compact, and convex, and satisfies
\[
\mathfrak a := \inf_{a\in\tilde K}|a|>0,
\]
by Assumption~\ref{ass:non-replicability} (non-replicability of the benchmark). 
In particular, $|a|^2/2>0$ for all $a\in\tilde K$. 

\medskip
\noindent\textbf{Step 1: Reformulation of~\eqref{eq:HJB-Vk} as a first-order system.}
Let $w:=V_\kappa$ and $v:=V_\kappa'$. Define
\[
\mathcal F(w,v)
:=\sup_{a\in\tilde K}\frac{2}{|a|^2}\Bigl[\delta w+\Bigl(p+q^\top a-\frac{|a|^2}{2}\Bigr)v\Bigr].
\]
We claim that $V_\kappa$ solves~\eqref{eq:HJB-Vk} on $[0,\ell]$ if and only if $(w,v)$ satisfies
\begin{equation}\label{eq:system}
\begin{pmatrix}w\\v\end{pmatrix}'
=
G(w,v):=
\begin{pmatrix}v\\ \mathcal F(w,v)\end{pmatrix},
\qquad
\begin{pmatrix}w(0)\\v(0)\end{pmatrix}
=
\begin{pmatrix}\kappa\\0\end{pmatrix}.
\end{equation}
\\
(\eqref{eq:HJB-Vk} $\Rightarrow$ \eqref{eq:system}) 
Assume $V_\kappa\in C^2([0,\ell])$ solves~\eqref{eq:HJB-Vk}. 
For each fixed $x\in[0,\ell]$, denote $w=w(x)$, $v=v(x)$, and $v'=v'(x)$. 
Equation~\eqref{eq:HJB-Vk} reads
\[
-\delta w+\inf_{a\in\tilde K}\left\{\frac{|a|^2}{2}v'-
\Bigl(p+q^\top a-\frac{|a|^2}{2}\Bigr)v\right\}=0.
\]
Equivalently, for any $a\in\tilde K$,
\[
\frac{|a|^2}{2}v'\ge \delta w+\Bigl(p+q^\top a-\frac{|a|^2}{2}\Bigr)v,
\]
and, since $|a|>0$ for all $a\in\tilde K$, dividing by $|a|^2/2$ yields
\begin{align}\label{eq:vp_ge_F}
v' \ge \frac{2}{|a|^2}\Bigl[\delta w+\Bigl(p+q^\top a-\frac{|a|^2}{2}\Bigr)v\Bigr]
\qquad \forall a\in\tilde K.
\end{align}
Taking the supremum over $a\in\tilde K$ gives $v'\ge \mathcal F(w,v)$.
\\
On the other hand, since the map
\[
a\mapsto \frac{|a|^2}{2}v'-\Bigl(p+q^\top a-\frac{|a|^2}{2}\Bigr)v
\]
is continuous, there exists a minimizer $a^*(x)$ due to the compactness of the set $\tilde K$. 
At this $a^*(x)$, the infimum is attained and~\eqref{eq:HJB-Vk} implies
\[
\frac{|a^*(x)|^2}{2}v' = \delta w+\left(p+q^\top a^*(x)-\frac{|a^*(x)|^2}{2}\right)v
\]
so that 
\[
v' = \frac{2}{|a^*(x)|^2}\Bigl[\delta w+\Bigl(p+q^\top a^*(x)-\frac{|a^*(x)|^2}{2}\Bigr)v\Bigr]
\le \mathcal F(w,v).
\]
Together with~\eqref{eq:vp_ge_F}, this entails $v'=\mathcal F(w,v)$, so that, in light of $w'=v$ and the initial conditions, we obtain~\eqref{eq:system}.
\\
(\eqref{eq:system} $\Rightarrow$ \eqref{eq:HJB-Vk}) Conversely, assume $(w,v)$ satisfies \eqref{eq:system} on $[0,\ell]$ and set $V_\kappa:=w$. Then $V_\kappa\in C^2([0,\ell])$ with $V_\kappa'=v$ and $V_\kappa''=v'=\mathcal F(w,v)$.
By definition of $\mathcal F$, for all $a\in\tilde K$ we have
\begin{align*}
V_\kappa'' = \mathcal F(V_\kappa,V_\kappa') \ge \frac{2}{|a|^2}\Bigl[\delta V_\kappa +\Bigl(p+q^\top a-\frac{|a|^2}{2}\Bigr)V_\kappa'\Bigr],
\end{align*}
i.e.,
\begin{align}
\label{eq:a2vkappa}
\frac{|a|^2}{2}V_\kappa''-\left(p+q^\top a-\frac{|a|^2}{2}\right)V_\kappa' \ge \delta V_\kappa
\qquad \forall a\in\tilde K,
\end{align}
which continues to hold for the infimum in $\tilde K$.
Moreover, the supremum defining $\mathcal F(V_\kappa,V_\kappa')$ is attained due to continuity of $\mathcal F$ and compactness of $\tilde K$, so there exists $\hat a(x)\in\tilde K$ such that equality holds for $a=\hat a(x)$; hence the infimum in~\eqref{eq:a2vkappa} equals $\delta V_\kappa$ and \eqref{eq:HJB-Vk} holds with equality. 
Therefore, $V_\kappa$ is a classical solution of the HJB equation with $V_\kappa(0)=\kappa$ and $V_\kappa'(0)=0$.

\medskip
\noindent\textbf{Step 2: $G$ is globally Lipschitz.}
For each $a\in\tilde K$, define the linear map
\[
\phi_a(w,v):=\frac{2}{|a|^2}\Bigl[\delta w+\Bigl(p+q^\top a-\frac{|a|^2}{2}\Bigr)v\Bigr].
\]
Then $\mathcal F(w,v)=\sup_{a\in\tilde K}\phi_a(w,v)$.
Let $\bar a:=\sup_{a\in\tilde K}|a|<\infty$ (compactness). Since $|a|\ge \mathfrak a>0$ on $\tilde K$, the coefficients of $\phi_a$ are uniformly bounded:
\[
\left|\frac{2\delta}{|a|^2}\right|\le \frac{2\delta}{\mathfrak a^2},
\qquad
\left|\frac{2}{|a|^2}\Bigl(p+q^\top a-\frac{|a|^2}{2}\Bigr)\right|
\le \frac{2}{\mathfrak a^2}\Bigl(|p|+|q|\bar a+\frac{\bar a^2}{2}\Bigr).
\]
Hence there exists $L<\infty$ such that
\[
|\phi_a(w_1,v_1)-\phi_a(w_2,v_2)|\le L\bigl(|w_1-w_2|+|v_1-v_2|\bigr)
\quad \text{for all } a\in\tilde K.
\]
We thus obtain, for any $(w_1,v_1),(w_2,v_2)\in\mathbb R^2$,
\[
|\mathcal F(w_1,v_1)-\mathcal F(w_2,v_2)|
\le \sup_{a\in\tilde K}|\phi_a(w_1,v_1)-\phi_a(w_2,v_2)|
\le L\bigl(|w_1-w_2|+|v_1-v_2|\bigr), 
\]
so that $\mathcal F$ is globally Lipschitz, and therefore $G(w,v)=(v,\mathcal F(w,v))^\top$ is globally Lipschitz on $\mathbb R^2$.

\medskip
\noindent\textbf{Step 3: Existence and uniqueness (Picard-Lindel\"of).}
Since $G$ is globally Lipschitz, the Picard--Lindel\"of theorem yields a unique solution $(w,v)\in C^1([0,\ell];\mathbb R^2)$ of \eqref{eq:system}. Defining $V_\kappa:=w$, we get $V_\kappa\in C^2([0,\ell])$ with $V_\kappa'=v$ and $V_\kappa''=v'=\mathcal F(V_\kappa,V_\kappa')$, and the initial conditions hold by construction. 
By Step~1, this yields the unique classical solution $V_\kappa\in C^2([0,\ell])$ of the HJB equation~\eqref{eq:HJB-Vk}.

\medskip
\noindent\textbf{Step 4: Scaling $V_\kappa=\kappa V_1$.}
For any $\lambda>0$, positive homogeneity of the supremum implies
\[
\mathcal F(\lambda w,\lambda v)=\lambda \mathcal F(w,v),
\qquad\text{hence}\qquad
G(\lambda w,\lambda v)=\lambda G(w,v).
\]
Let $(w_1,v_1)$ be the unique solution of \eqref{eq:system} with $\kappa=1$. Then $(\kappa w_1,\kappa v_1)$ satisfies \eqref{eq:system} with initial condition $(\kappa,0)$, and by uniqueness it coincides with $(w,v)$. Therefore $V_\kappa=w=\kappa w_1=\kappa V_1$ on $[0,\ell]$.
\end{proof}
\subsection{Proof of Theorem~\ref{thm:normalized-V}}

\noindent(1) Note that~\eqref{eq:V-normalized} is well-defined by Lemma~\ref{lemma:positivity} (i). The other assertion is easy to see. 

\medskip
\noindent(2) 
The function $\beta:[0,\ell]\to {\mathbb R}_+$ is well-defined by Lemma~\ref{lemma:positivity} (ii), (iii) below.
Furthermore, 
\begin{align*}
&\frac{|a|^2}{2}V_\kappa''(x)
-\left(p+q^\top a-\frac{|a|^2}{2}\right)V_\kappa'(x) \\
&= \frac{V_\kappa''(x)+V_\kappa'(x)}{2}|a|^2-V_\kappa'(x)q^\top a 
-p V_\kappa'(x) \\
&\stackrel{\eqref{eq:def-beta-V}}{=}\frac{V_\kappa''(x)+V_\kappa'(x)}{2}
\left(|a-\beta(x)q|^2-|\beta(x)q|^2\right)
-p V_\kappa'(x).
\end{align*}
Owing to the positivity (iii) of Lemma~\ref{lemma:positivity},
the minimization problem admits a unique minimizer, and 
the expression $a^*(x)=\Pi_{\tilde K}(\beta(x)q)$ follows.

\medskip
\noindent(3) 
Note that the function $x\mapsto a^*(x)$ is Lipschitz continuous due to the Lipschitz continuity
of $x\mapsto \beta(x)$ (shown in Lemma B.2), and that of the projection map 
$\Pi_{\tilde{K}}: {\mathbb R}^n \to \tilde{K}$. 
Hence, the first assertion
follows from the standard result of SDE theory. 
Next, for verifying the optimality of the candidate strategy 
$\alpha^* \in {\mathscr A}_{\tilde{K}}$,
take $\alpha\in {\mathscr A}_{\tilde{K}}$ arbitrarily.
Recalling the dynamics \eqref{eq:dyn-X} of $X^\alpha$ 
and the fact that $\bar{L}^\alpha$ is flat off 
$\{ t\ge 0 | X^\alpha_t =0\}$, i.e., 
\begin{equation}
 \int_0^\infty 1_{\{ X_s>0\}} \d\bar{L}^\alpha_s =0, 
\label{eq:reflect}
\end{equation}
we use It\^o's formula to see that
\begin{align}
&e^{-\delta (t\wedge \tau_d)} V\left( X_{t\wedge \tau_d}\right)-V(X_0) 
\nonumber \\
=&\int_0^{t\wedge \tau_d} 
e^{-\delta s} V'\left( X_{s}\right) \d\bar{L}_s 
-\int_0^{t\wedge \tau_d} 
e^{-\delta s} V'\left( X_{s}\right) \alpha_s^\top \d W_s
\nonumber \\
+&\int_0^{t\wedge \tau_d} 
e^{-\delta s} 
\left[
-\delta V
+\frac{|\alpha_s|^2}{2}V''
-\left( p + q^\top \alpha_s -\frac{1}{2}|\alpha_s|^2\right)
V'\right]\left( X_{s}\right)\d s.
\end{align}
Here, the first term of the right-hand-side is $0$ by 
\eqref{eq:reflect}
and $V'(0+)=0$, 
the second term of the right-hand-side is a martingale, 
and the integrand of the third term of the right-hand-side 
is nonnegative as $V$ solves HJB equation \eqref{eq:HJB-V}. Hence, 
taking expectation of (5.2), we deduce
\[
 {\mathbb E}\left[ e^{-\delta (t\wedge \tau_d)}
V\left( X_{t\wedge \tau_d}\right) 
\middle| X_0=x\right] \ge V(x).
\]
Letting $t\to\infty$ and noting that $V(d)=1$, we obtain 
\[
 {\mathbb E}\left[ e^{-\delta \tau_d} \middle| X_0=x\right] \ge V(x).
\]
Taking supremum with respect to $\alpha\in {\mathscr A}_{\tilde{K}}$, 
we deduce that $\bar{V}(x)\ge V(x)$.
Next, setting $\alpha=\alpha^* \in {\mathscr A}_{\tilde{K}}$ in (4.4), 
we deduce
\[
e^{-\delta (t\wedge \tau^*_d)} V\left( X^*_{t\wedge \tau^*_d}\right)-V(X^*_0)  
=-\int_0^{t\wedge \tau^*_d} 
e^{-\delta s} V'\left( X^*_{s}\right) 
\left( \alpha^*_s \right)^\top \d W_s,
\]
where we recall $X^*= X^{\alpha^*}$ and write
$\tau^*_d:=\inf\left\{ t\ge 0 \middle| X^*_t>\log d\right\}$.
Taking expectation and letting $t\to\infty$, we obtain 
\[
 {\mathbb E}\left[ e^{-\delta \tau^*_d}
\middle| X^*_0=x\right] = V(x),
\]
from which we deduce $\bar{V}(x)=V(x)$ and the optimality 
of $\alpha^*\in {\mathscr A}_{\tilde{K}}$. 
\subsection{Proof of Theorem~\ref{thm:hjb-sol-W}}

\noindent{(1)} By definition, 
$r(a)$ is the unique positive root of $f_a(r)=0$. Since
\[
f_a(0)=-\delta<0,
\quad
\lim_{r\to\infty} f_a(r)\to\infty,
\]
it follows that
\[
f_a(r)< 0
\quad \text{for }{}^\forall r\in[0,r(a))
\quad\text{and}\quad
f_a\left( r(a)\right)=0.
\]
Since $\bar r\le r(a)$ for every $a\in\tilde K$, we obtain
\begin{equation}
f_a(\bar{r})\le 0
\quad \text{for }{}^\forall a\in\tilde K
\quad \text{and}\quad f_{a^\star}(\bar{r})=0.
\label{eq:a-star-max}
\end{equation}
We see the function \eqref{eq:W-explicit} satisfies, 
for $a\in\tilde K$, 
\[
-\delta W(x)+\frac{|a|^2}{2}W''(x)
-\left(p+q^\top a-\frac{|a|^2}{2}\right)W'(x)
=f_a\left(\bar{r}\right)W(x)\le 0.
\]
Taking the supremum over $a\in\tilde K$, 
and using equality for $a=a^\star$, we obtain
\begin{equation}
-\delta W(x)
+\sup_{a\in\tilde K}
\left[
\frac{|a|^2}{2}W''(x)
-\left(p+q^\top a-\frac{|a|^2}{2}\right)W'(x)
\right]
=f_{a^\star}\left(\bar{r}\right)W(x)=0.
\label{eq:HJB-astar}
\end{equation}
To ensure the boundary conditions, 
$W(\ell)=1$ and $\lim_{x\to\infty}W(x)=0$,
are immediate.

\medskip
\noindent{(2)}
We show a sketch of the proof as it is standard argument: 
Take $\alpha\in\mathscr A_K$ arbitrarily.
Applying Ito's formula to
$e^{-\delta t}W(X_t^\alpha)$ 
yields
\begin{multline*}
\d\bigl(e^{-\delta t}W(X_t^\alpha)\bigr)
= e^{-\delta t}W'(X^\alpha_t)\d W_t \\
+e^{-\delta t}
\left\{
-\delta W
+\frac{|\alpha_t|^2}{2}W''
-\left(p+q^\top \alpha_t-\frac12|\alpha_t|^2\right)W'
\right\}(X_t^\alpha)\,\d t
\end{multline*}
Noting that the second term of the right-hand-side of the above 
is nonpositive,
we take expectations and stopping at $\rho_d$  to see that
\[
W(x)\ge \mathbb{E}\left[e^{-\delta\rho_d} W\left(X_{\rho_d}^\alpha\right)\right]
=\mathbb{E}\left[e^{-\delta\rho_d}\right].
\]
Hence $W(x)\ge \bar{W}(x)$. 
Next, take the constant control 
$\alpha^*\equiv a^\star\in {\mathscr A}_{\tilde{K}}$, 
and write 
$X^*:=X^{\alpha^*}$ and 
$\rho^*_d:=\inf\left\{ t\ge 0 \middle| X^*_t\le \ell \right\}$.
We deduce that 
\[
W(x)= \mathbb{E}\left[e^{-\delta\rho^*_d} W\left(X^*_{\rho^*_d}\right)\right]
=\mathbb{E}\left[e^{-\delta\rho^\star_d}\right]
\]
as $a^\star$ is the maximizer in the maximization 
of \eqref{eq:HJB-astar}. 
Hence, the optimality of $\alpha^* \in {\mathscr A}_{\tilde{K}}$
and the equality $W\equiv \bar{W}$ follow.
\subsection{Proof of Theorem~\ref{thm:hjb-sol-U}}

\noindent{(1)} 
For $\kappa,C>0$, consider the function
\[
\bar{U}_\kappa(x):=
\begin{cases}
V_\kappa(x) & \text{for $0\le x\le \ell$},\\
-C e^{-\bar r(x-\ell)}+\dfrac{1}{\delta} & \text{for $x>\ell$},
\end{cases}
\]
where $V_\kappa$ in the right-hand-side 
is the unique solution on $[0,\ell]$ obtained in Theorem 4.1
and $\bar{r}$ is given by \eqref{eq:bar-r}.
Note that 
$\bar{U}_\kappa|_{[0,\ell]}=V_\kappa=\kappa V_1$ solves \eqref{eq:HJB-U-left}
with the boundary condition $\bar{U}'_\kappa(0+)=0$,
which has been shown in Theorem 4.1. 
Also, note that
$\bar{U}_\kappa|_{(\ell,\infty)}$ solves \eqref{eq:HJB-U-right}
as we deduce, for $x>\ell$, 
\begin{align}
&-\delta \bar{U}_\kappa(x)
+1
+\inf_{a\in\tilde K}
\left[
\frac{|a|^2}{2} \bar{U}_\kappa''(x)
-\left(p+q^\top a-\frac{|a|^2}{2}\right)
\bar{U}_\kappa'(x)
\right] \nonumber \\
=&-C e^{-\bar r(x-\ell)} \sup_{a\in\tilde{K}} f_a\left( \bar{r}\right) 
\nonumber \\
=&-C e^{-\bar r(x-\ell)} f_{a^\star}\left( \bar{r}\right)=0,
\label{eq:a-star-maximizer}
\end{align}
where we use \eqref{eq:a-star-max}.
In addition, we see 
$\lim_{x\to\infty} \bar{U}_\kappa(x)={1}/{\delta}$.
For satifying 
other boundary conditions in \eqref{eq:HJB-U-b}:
\begin{equation}
\label{eq:matching}
\begin{split}
\bar{U}_\kappa(\ell-)=&V_{\kappa}(\ell-)=\kappa V_1(\ell-)
=-C+\frac{1}{\delta}=\bar{U}_{\kappa}(\ell+), \\
\bar{U}'_\kappa(\ell-)=&{V}_\kappa'(\ell-)=\kappa V_1'(\ell-)
=\bar r C=\bar{U}'_\kappa(\ell+),
\end{split}
\end{equation}
we solve the simultaneous linear equations \eqref{eq:matching} 
with respect to $(\kappa,C)$. The solution 
$\left( \kappa^\star, C^\star\right)$ is given by
\eqref{eq:kappa-star-C-star}, 
where we recall
\[
V_1(\ell)+\frac{V_1'(\ell-)}{\bar r}>0
\]
as $V_1(\ell)>0$, $V_1'(\ell-)>0$ and $\bar{r}>0$.
Hence, we obtain the function \eqref{eq:def-U} that solves
HJB equation \eqref{eq:HJB-U-left}-\eqref{eq:HJB-U-b}.

\medskip 
\noindent{(2)}
For $x \in (0,\ell)\cup (\ell,\infty)$, we see
\begin{align}
&\frac{|a|^2}{2}U''(x)
-\left(p+q^\top a-\frac{|a|^2}{2}\right)U'(x) 
\nonumber \\
=&\frac{U''(x)+U'(x)}{2}|a|^2-U'(x)q^\top a -p U'(x) 
\nonumber \\
=&
\frac{U''(x)+U'(x)}{2}
\left(|a-\beta(x)q|^2-|\beta(x)q|^2\right)-p U'(x).
\label{eq:comp-sq}
\end{align}
Here, note that $U''(x)+U'(x)>0$ holds 
for $0<x<\ell$ by (iii) of Lemma~\ref{lemma:positivity}, 
and
\[
 U''(x)+ U'(x)=C^\star\bar{r}\left( 1-\bar{r}\right)>0
\quad \text{and}\quad
\beta(x)=\frac{1}{1-\bar{r}}>0
\]
for $x>\ell$ as $C^\star>0$ and $0<\bar{r}<1$
(i.e., Assumption 4.1). 
Therefore, the expression of the minimizer, 
$a^*(x)=\Pi_{\tilde K}(\beta(x)q)$ follows for any
$x \in (0,\ell)\cup (\ell,\infty)$ from \eqref{eq:comp-sq}.
Further, we see $a^*(x)=a^\star$ for $x>\ell$
by \eqref{eq:a-star-max} and \eqref{eq:a-star-maximizer}.

\medskip 
\noindent{(3)}
Take $\alpha\in\mathscr A_{\tilde{K}}$ arbitrarily.
Since $U\in C^1([0,\infty))\cap C^2((0,\ell)\cup (\ell,\infty))$,
we apply the generalized It\^o formula to see 
\begin{multline}
U\left(X_t^\alpha\right)
=U(x)
+\int_0^t U'\left(X_s^\alpha \right)\d X_s^\alpha \\
+\frac12\int_0^t U''
\left(X_s^\alpha \right)\mathbf{1}_{\{X_s^\alpha \neq \ell\}}\,
\d\langle X^\alpha \rangle_s 
+\frac12\left\{ U'(\ell+)-U'(\ell-)\right\}L_t^\ell\left(X^\alpha \right),
\label{eq:gen-Ito}
\end{multline}
where 
$L_t^\ell(X^\alpha)$ denotes the local time 
of $X^\alpha$ at the level $\ell$.
By the smooth-fit condition in 
\eqref{eq:HJB-U-b}, the local-time term vanishes.
Moreover, since $\d\bar L^\alpha_t$ acts only when $X_t^\alpha=0$, 
and $U'(0+)=0$, we have
\[
\int_0^t U'\left(X_s^\alpha \right)\,\d\bar L^\alpha_s = 0.
\]
Hence, recalling $\d\langle X^\alpha \rangle_t = |\alpha_t|^2\,\d t$, 
\eqref{eq:gen-Ito} reduces to
\begin{multline*}
U\left(X_t^\alpha\right)
=U(x)
-\int_0^t U'\left(X_s^\alpha\right)\alpha_s^\top\,\d W_s \\
+\int_0^t
\left[
\frac{|\alpha_s|^2}{2}U''\left(X_s^\alpha \right)
-\left(p+q^\top \alpha_s-\frac{|\alpha_s|^2}{2}\right)U'
\left(X_s^\alpha \right)\right]\d s.
\end{multline*}
So, we deduce 
\begin{multline}
\d\left(e^{-\delta t}U\left(X_t^\alpha\right)\right)
=-e^{-\delta t}U'\left(X_t^\alpha \right)\alpha_t^\top\,\d W_t  \\
+e^{-\delta t}
\left[
-\delta U(X_t^\alpha)
+\frac{|a_t|^2}{2}U''\left(X_t^\alpha \right)
-\left(p+q^\top \alpha_t-\frac{|\alpha_t|^2}{2}\right)
U'(X_t^\alpha)
\right]\d t.
\label{eq:discounted-Ito}
\end{multline}
Since $U$ solves
\eqref{eq:HJB-U-left} and \eqref{eq:HJB-U-right}, we have
\[
-\delta U(x)
+\frac{|a|^2}{2}U''(x)
-\left(p+q^\top a-\frac{|a|^2}{2}\right)U'(x)
\ge -\mathbf{1}_{\{x>\ell\}},
\quad
\text{for all $a \in \tilde{K}$ and $x\neq \ell$}.
\]
Let 
$\tau_n:=\inf\{t\ge 0: X_t^\alpha \ge n\}$
for $n\in\mathbb N$ and $T\in {\mathbb R}_{++}$.
Integrating up to $T\wedge \tau_n$ 
and taking expectations, we obtain
%
\begin{align}
U(x)
&\le
{\mathbb E}\left[
\int_0^{T\wedge\tau_n} e^{-\delta s}
\mathbf{1}_{\left\{X_s^\alpha>\ell\right\}}\,\d s
+e^{-\delta(T\wedge\tau_n)}U(X_{T\wedge\tau_n}^\alpha)
\middle| X_0^\alpha=x\right],
\label{eq:verification-ineq}
\end{align}
Letting $n\to\infty$, the dominated convergence theorem yields
\[
U(x)
\le
{\mathbb E}\left[
\int_0^T e^{-\delta s}\mathbf{1}_{\left\{X_s^\alpha>\ell\right\}}\,\d s
+e^{-\delta T}U(X_T^\alpha)
\middle| X_0^\alpha=x\right].
\]
Now, recalling 
$\lim_{T\to\infty}{\mathbb E}\left[e^{-\delta T}u(X_T^\alpha)
\middle| X_0^\alpha=x\right]= 0$
as $U$ is bounded, 
we deduce 
\[
U(x)
\le
{\mathbb E}\left[
\int_0^\infty e^{-\delta s}\mathbf{1}_{\{X_s^\alpha>\ell\}}\,\d s
\middle| X_0^\alpha=x \right].
\]
Since $\alpha\in\mathscr A_K$ was arbitrary, we conclude that
$U(x)\le \bar{U}(x)$.
Next, note that the minimizer function $a^*$ satisfies
\[
-\delta U(x)
+\frac{|a^*(x)|^2}{2}U''(x)
-\left(p+q^\top a^*(x)-\frac{|a^*(x)|^2}{2}\right)U'(x)
=-\mathbf{1}_{\{x>\ell\}},
\quad x\neq \ell.
\]
Repeating the above argument with $\alpha^*:=(\alpha^*_t)_{t\ge 0}$,
$\alpha^*_t:=a^*(X^*_t)$, the inequality
\eqref{eq:verification-ineq} becomes an equality, and we obtain
\[
U(x)
=
{\mathbb E}\left[
\int_0^\infty e^{-\delta s}\mathbf{1}_{\{X_s^{\alpha^*}>\ell\}}\,\d s
\middle| X_0^{\alpha^*}=x\right].
\]
Hence, we conculde $U(x)=\bar{U}(x)$ ($x\ge 0$)
and the optimality of $\alpha^*\in {\mathscr A}_{\tilde{K}}$.
\subsection{Proof of Proposition 4.1}

\noindent{(1)} 
From \eqref{eq:HJB-U-left}, we see that, 
for all $a\in \tilde K$ and $0<x<\ell$, 
\begin{align*}
U''(x) \ge 
\frac{2}{|a|^2}
\left[
\delta U(x)+\left(p+q^\top a-\frac{|a|^2}{2}\right)U'(x) \right].
\end{align*}
Hence, $U''(x)>0$ for all $x\in (0,\ell)$ follows
by \eqref{eq:suff-drift} and (i)-(ii) of Lemma~\ref{lemma:positivity}
in Appendix~\ref{sec:positivities}, recalling $U|_{[0,\ell)}=\kappa V_1|_{[0,\ell)}$.
Therefore, $0<\beta(x)<1$ 
for all $x\in (0,\ell)$ also follows, using (ii) of Lemma B.1 again.

\medskip

\noindent{(2)}
Under Condition (a), we deduce
\begin{align*}
\gamma(\ell-)=&\min\left\{ \overline{c},
\max\left\{ \underline{c}, \beta(\ell-)m-\eta\right\}\right\}
 =\overline{c}, \\
\gamma(\ell+)=&\min\left\{ \overline{c},
\max\left\{ \underline{c}, \frac{m}{1-\bar{r}}-\eta\right\}\right\}
 =\overline{c}, 
\end{align*}
hence, the continuity of $x\mapsto \gamma(x)$ at $x=\ell$ follows. 
Case with Condition (b) is seen similarly.
For SDE \eqref{eq:SDE-X-global} with the coefficient $x\mapsto a^*(x)$, 
which is bounded and Lipschitz continuous,  
the existence and uniqueness of the strong solution 
follows from a standard result of reflecting SDEs.

\subsection{Proof of Proposition \ref{prop:effective_1d_projection}}

\begin{proof}[Proof of Proposition \ref{prop:effective_1d_projection}]
We first show that the projection problem is effectively one-dimensional.
By assumption,
\[
\tilde K
= \{cu+z\mid c\in I,\ z\in C\},
\quad 0\in C\subset u^\perp,
\quad\text{and}\quad q\in \operatorname{span}\{u\}.
\]
Hence we see that
\[
q=\tilde q\,u \quad\text{with}\quad
\tilde q:=\frac{q^\top u}{|u|^2}\in\mathbb R.
\]
Let \(t\ge 0\). For \(a=cu+z\in\tilde K\), where \(c\in I\) and \(z\in C\subset u^\perp\), we have
\[
|tq-a|^2
=
|t\tilde q\,u-(cu+z)|^2
=
|(t\tilde q-c)u-z|^2.
\]
Since \(z\perp u\), this becomes
\[
|tq-a|^2
=
|u|^2|t\tilde q-c|^2+|z|^2.
\]
Because \(0\in C\), the minimum over \(z\in C\) is attained at \(z=0\). 
Therefore, the relation \eqref{eq:prop4.3proj} and 
the expression \eqref{eq:prop4.3opt}
with \eqref{eq:prop4.3gamma} follows. 
Since \(0\notin I\) and \(I\) is compact, 
there exists \(\varepsilon_0>0\) such that
\[
\left|\gamma^*(x)\right|\ge \varepsilon_0
\qquad\text{for all }x\ge 0.
\]
Thus the diffusion coefficient is uniformly nondegenerate and remains aligned with the fixed direction $u$.
We now rewrite the closed-loop SDE \eqref{eq:SDE-X-global} as
\begin{align}
\d{X}^*_t&= -\gamma^*\left( X^*_t\right)u^\top \d W_t 
-\left\{ p+ \gamma^* \left( X^*_t\right) q^\top u
-\frac{1}{2}
\left| \gamma^*\left( X^*_t\right)u\right|^2
\right\}\d t+ \d \overline{L^*_t} \nonumber \\
&=-\tilde{\gamma}^*(X_t) \,\d W^0_t -b^*(X_t)\,\d t 
+ \d \overline{L^*_t},
\label{eq:SDE-1dim-Xstar}
\end{align}
where we define the one-dimensional Brownian motion
\[
W^0_t:=\frac{u^\top W_t}{|u|}
\quad (t\ge 0), 
\]
and 
\[
\tilde{\gamma}^*(x):=|u|\gamma^*(x),
\quad 
b(x):= p+ \gamma^*(x) q^\top u -\frac12|\gamma^*(x)u|^2.
\]
Recall that the coefficients \(\tilde{\gamma}^*\) and \(b^*\) 
are bounded and 
Lipschitz continuous on \((0,\ell)\) and constant on \([\ell,\infty)\). 
Thus the only possible discontinuity occurs at $x=\ell$.
Moreover, $\tilde{\gamma}^*$ is uniformly nondegenerate:
\[
|\tilde{\gamma}^*(x)|\ge |u|\varepsilon_0>0,
\quad (x\ge 0).
\]
By the standard weak existence result for one-dimensional reflected SDEs 
with bounded measurable drift and bounded uniformly nondegenerate 
diffusion coefficient, equation  
\eqref{eq:SDE-1dim-Xstar}
admits a weak solution.
It remains to prove pathwise uniqueness.
Observe that the diffusion coefficient $\tilde{\gamma}^*$ is scalar-valued, bounded, uniformly nondegenerate, and piecewise Lipschitz continuous with at most one discontinuity point. 
Moreover, the drift coefficient $b^*$ is bounded and piecewise Lipschitz continuous.
Hence the reflected SDE~\eqref{eq:SDE-1dim-Xstar} belongs to the class of one-dimensional reflected SDEs with discontinuous coefficients treated by Le Gall~\cite{le_gall_one-dimensional_1984} and Bass--Chen \cite{bass_one-dimensional_2005}.
Therefore, the pathwise uniqueness of the solution of \eqref{eq:SDE-1dim-Xstar} follows.  
Also, the unique existence of the strong solution of 
\eqref{eq:SDE-1dim-Xstar} (and \eqref{eq:SDE-X-global})
follows by the Yamada--Watanabe theorem.
\end{proof}

\section{Positivities and a Lipschitz Continuity}\label{sec:positivities}

In this sections, we prepare two lemmata, both of which are useful for showing our main theorems.
\begin{lem}\label{lemma:positivity}
For the solution $V_\kappa\in C^2([0,\ell])$ ($\kappa>0$) of 
\eqref{eq:HJB-Vk} in Theorem 4.1, the following properties hold:
\begin{itemize}
\item[\rm (i)] $V_\kappa(x)>0$ for all $x\in[0,\ell]$;
\item[\rm (ii)] $V_\kappa'(x)>0$ for all $x\in(0,\ell]$; 
\item[\rm (iii)] $V_\kappa''(x)+V_\kappa'(x)>0$ for all $x\in[0,\ell]$. 
\end{itemize}
\end{lem}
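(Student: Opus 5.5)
The plan is to work with the first-order system \eqref{eq:system} from Step~1 of the proof of Theorem~\ref{thm:U-solves-HJB}, namely $V_\kappa''=\mathcal F(V_\kappa,V_\kappa')$, with $V_\kappa(0)=\kappa>0$ and $V_\kappa'(0)=0$. The key observation is that
\[
\mathcal F(w,0)=\sup_{a\in\tilde K}\frac{2\delta w}{|a|^2},
\]
which is strictly positive whenever $w>0$. This holds because $\tilde K$ is compact and $|a|\ge \underline a>0$ on $\tilde K$ by Assumption~\ref{ass:non-replicability}.

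\textbf{Steps (i) and (ii).} First, at $x=0$ the observation gives $V_\kappa''(0)=\mathcal F(\kappa,0)>0$. Since $V_\kappa'(0)=0$, it follows that $V_\kappa'>0$ on some interval $(0,\varepsilon)$. Next, I argue by contradiction. Suppose $V_\kappa'$ vanishes somewhere in $(0,\ell]$, and let $x_1$ be its first zero there. On $[0,x_1]$ we have $V_\kappa'\ge 0$, so $V_\kappa(x_1)\ge\kappa>0$. The ODE then gives $V_\kappa''(x_1)=\mathcal F(V_\kappa(x_1),0)>0$. But $V_\kappa'$ is positive on $(0,x_1)$ and vanishes at $x_1$, which forces $V_\kappa''(x_1)\le 0$. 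This contradiction proves (ii). Then (i) follows at once: $V_\kappa$ is nondecreasing, so $V_\kappa\ge\kappa>0$ on $[0,\ell]$.

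\textbf{Step (iii).} I would not use the first-order system directly here. Instead I rewrite the HJB equation \eqref{eq:HJB-Vk} by completing terms, exactly as in the proof of Theorem~\ref{thm:normalized-V}(2). Setting $s(x):=V_\kappa''(x)+V_\kappa'(x)$, the equation becomes
\[
\inf_{a\in\tilde K}\Bigl[\tfrac{|a|^2}{2}\,s(x)-(p+q^\top a)\,V_\kappa'(x)\Bigr]=\delta V_\kappa(x)>0 .
\]
At $x=0$ we have $s(0)=V_\kappa''(0)>0$ from the previous step. For $x\in(0,\ell]$, suppose $s(x)\le 0$. Then every term $\tfrac{|a|^2}{2}s(x)$ is nonpositive, so the identity forces
\[
(p+q^\top a)\,V_\kappa'(x)\le-\delta V_\kappa(x)<0\quad\text{for all }a\in\tilde K .
\]
Since $V_\kappa'(x)>0$ by (ii), this means $p+q^\top a<0$ for every $a\in\tilde K$.

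\textbf{Main obstacle.} Ruling out this last situation is the real difficulty, because it is a genuine structural condition rather than a consequence of the ODE. I would invoke Assumption~4.2, which is equivalent to the existence of some $a\in\tilde K$ with $p+q^\top a>\delta>0$. That gives the contradiction, so $s>0$ on $[0,\ell]$.

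If one wants (iii) without Assumption~4.2, I would first try a comparison argument on $s$: differentiate the ODE to get a linear-type inequality for $s$, and use that $s(0)>0$. I expect this can fail in general, for instance when $\tilde K$ lies in the region $\{p+q^\top a<0\}$. So I would state the lemma under Assumption~4.2, which is in force wherever (iii) is used, namely in Theorem~\ref{thm:hjb-sol-U} and Proposition~\ref{prop:continuity_threshold}.
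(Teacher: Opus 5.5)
Your (i) and (ii) are the paper's argument essentially verbatim. You use $\mathcal F(w,0)=\sup_{a}2\delta w/|a|^2>0$ at the first zero of $V_\kappa'$.

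For (iii) you use the same identity as the paper, $V_\kappa''+V_\kappa'=\sup_{a\in\tilde K}\frac{2}{|a|^2}\{\delta V_\kappa+(p+q^\top a)V_\kappa'\}$, in completed-square form. You are right that positivity of this supremum does not follow from $V_\kappa,V_\kappa'>0$ alone. The paper's phrase ``for sufficiently large $q^\top a$'' is not available on a compact $\tilde K$.

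Your repair is where the proposal falls short. The assumption $\bar r<1$ is not a hypothesis of the lemma, and it is not in force where (iii) is first needed:
\begin{itemize}
\item (iii) is what makes $\beta$ in \eqref{eq:def-beta-V} well defined and the minimizer in Theorem~\ref{thm:normalized-V}(2) unique;
\item it enters the Lipschitz lemma in Appendix~\ref{sec:positivities}, which is used for Theorem~\ref{thm:normalized-V}(3).
\end{itemize}
All of this concerns the small-drawdown problem of Section 4.1, where only Assumption~\ref{ass:non-replicability} is imposed. Restating the lemma under $\bar r<1$ would therefore weaken Theorem~\ref{thm:normalized-V}. Moreover, your expectation that (iii) can fail when $\tilde K\subset\{a:\,p+q^\top a<0\}$ is incorrect.

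The missing idea is that $V_\kappa$ is strictly convex on all of $[0,\ell]$. This follows from a first-zero argument for $V_\kappa''$ of the same type you used for $V_\kappa'$.

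Write $V_\kappa''(x)=\max_{a\in\tilde K}g(x,a)$ with $g(x,a):=\frac{2}{|a|^2}\{\delta V_\kappa(x)+(p+q^\top a-\frac{|a|^2}{2})V_\kappa'(x)\}$. Then $\partial_x g(x,a)=\frac{2}{|a|^2}\{\delta V_\kappa'(x)+(p+q^\top a-\frac{|a|^2}{2})V_\kappa''(x)\}$.

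Since $V_\kappa''(0)>0$, if $V_\kappa''\le 0$ somewhere there is a first zero $x_1\in(0,\ell]$. On $[0,x_1)$ we have $V_\kappa''>0$, hence $V_\kappa'(x_1)>0$.

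As $y\uparrow x_1$ we have $V_\kappa''(y)\to 0$. Also $|a|\ge\underline a$ and $\bigl|p+q^\top a-\tfrac{|a|^2}{2}\bigr|$ is bounded on $\tilde K$. So there is $\eta>0$ with $\partial_x g(y,a)>0$ for all $y\in[x_1-\eta,x_1]$ and all $a\in\tilde K$.

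Take $x\in(x_1-\eta,x_1)$ and a maximizer $a_x$ of $g(x,\cdot)$. Then
\[
0=V_\kappa''(x_1)\ge g(x_1,a_x)>g(x,a_x)=V_\kappa''(x)>0,
\]
a contradiction. Hence $V_\kappa''>0$ on $[0,\ell]$. This gives (iii) directly, and (ii) and (i) follow at once as well, with no sign condition on $p+q^\top a$.

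As a sanity check, take a singleton $\tilde K=\{a\}$. The solution is $V_\kappa=Ae^{r_+x}+Be^{r_-x}$ with characteristic roots $r_-<0<r_+$ and $A,B>0$. This is convex whatever the sign of $p+q^\top a$.
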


\begin{proof}
Since $V_\kappa(0)=\kappa>0$ and $V_\kappa'(0)=0$, we first evaluate 
\begin{equation}
V_\kappa''(x)
=
\sup_{a\in\tilde K}
\frac{2}{|a|^2}
\left\{
\delta V_\kappa(x)
+
\left(
p+q^\top a-\frac{|a|^2}{2}
\right)V_\kappa'(x)
\right\}.
\label{eq:Upp-sup}
\end{equation}
at $x=0$ and obtain
\[
V_\kappa''(0)
=
\sup_{a\in\tilde K}
\frac{2}{|a|^2}
\left\{
\delta V_\kappa(0)
+\left(p+q^\top a-\frac{|a|^2}{2}\right)V_\kappa'(0)
\right\}.
\]
Because $V_\kappa'(0)=0$, this simplifies to
\[
V_\kappa''(0)=\sup_{a\in\tilde K}\frac{2\delta\kappa}{|a|^2}>0
\]
by Assumption 4.1. 
By continuity of $V_\kappa''$, there exists $\varepsilon>0$ such that
\[
V_\kappa''(x)>0
\qquad \text{for all } x\in[0,\varepsilon].
\]
Hence $V_\kappa'$ is strictly increasing on $[0,\varepsilon]$, 
and since $V_\kappa'(0)=0$, we obtain
\[
V_\kappa'(x)>0
\qquad \text{for all } x\in(0,\varepsilon].
\]
Consequently,
\[
V_\kappa(x)
=
\kappa+\int_0^x V_\kappa'(y)\,\d y
>
\kappa
>
0
\qquad \text{for all } x\in(0,\varepsilon].
\]
We next prove that $V_\kappa'(x)>0$ for all $x\in(0,\ell]$. 
Suppose, for contradiction,
that there exists $x_1\in(0,\ell]$ such that 
$V_\kappa'(x_1)\le0$. Since $V_\kappa'>0$ on
$(0,\varepsilon]$, we may define
\[
x_0:=\inf\{x>\varepsilon:\,V_\kappa'(x)\le0\}\in[\varepsilon,\ell].
\]
By continuity of $V_\kappa'$, we have
\[
V_\kappa'(x_0)=0,
\quad
V_\kappa'(x)>0 \quad \text{for } x\in(0,x_0).
\]
Therefore
\[
V_\kappa(x_0)
=
\kappa+\int_0^{x_0}V_\kappa'(y)\,\d y >\kappa> 0.
\]
Evaluating \eqref{eq:Upp-sup} at $x=x_0$, we obtain
\[
V_\kappa''(x_0)
=
\sup_{a\in\tilde K}
\frac{2}{|a|^2}
\left\{ \delta V_\kappa(x_0)
+\left(p+q^\top a-\frac{|a|^2}{2}\right)V_\kappa'(x_0) \right\}
=\sup_{a\in\tilde K}\frac{2\delta V_\kappa(x_0)}{|a|^2}
>0.
\]
Since $V_\kappa'$ decreases from strictly positive values to 0 at $x_0$, $V''_\kappa(x_0) \le 0$, contradicting the strict positivity of the curvature of $V_\kappa$ established above. 
Hence
\[
V_\kappa'(x)>0
\qquad \text{for all } x\in(0,\ell].
\]
It follows immediately that
\[
V_\kappa(x)=\kappa+\int_0^xV_\kappa'(y)\,\d y>\kappa>0
\qquad \text{for all } x\in[0,\ell].
\]
Finally, adding $V_\kappa'(x)$ to both sides of \eqref{eq:Upp-sup}, we obtain
\[
V_\kappa''(x)+V_\kappa'(x)
=
\sup_{a\in\tilde K}
\frac{2}{|a|^2}
\left\{ \delta V_\kappa(x)
+ \left(p+q^\top a\right)V_\kappa'(x)
\right\}.
\]
Since $V_\kappa(x)>0$ and $V_\kappa'(x)>0$, 
the quantity inside the braces is positive for
sufficiently large $q^\top a$, and therefore the supremum is strictly positive.
Hence
\[
V_\kappa''(x)+V_\kappa'(x)>0
\qquad \text{for all } x\in[0,\ell].
\]
This completes the proof.
\end{proof}
\noindent
Further, we obtain the following.
\begin{lem}
The function $\beta: [0,\ell]\to {\mathbb R}_{+}$
given by \eqref{eq:def-beta-V} is Lipschitz continuous.
\end{lem}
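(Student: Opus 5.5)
The plan is to write $\beta = N/D$ with $N := V'$ and $D := V''+V'$, where $V = V_\kappa/V_\kappa(\ell)$ is the normalized solution of Theorem~\ref{thm:normalized-V}. I will show that $N$ and $D$ are Lipschitz on $[0,\ell]$ and that $D$ is bounded away from zero there. The quotient of two bounded Lipschitz functions with denominator bounded below by a positive constant is Lipschitz, via
\[
|\beta(x)-\beta(y)| \le \frac{|N(x)-N(y)|}{\min D} + \frac{\max|N|\,|D(x)-D(y)|}{(\min D)^2}.
\]
The scaling $V_\kappa=\kappa V_1$ from Theorem~\ref{thm:U-solves-HJB} shows that $\beta$ does not depend on $\kappa$, so it suffices to work with $V_1$.

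\textbf{Step 1 (regularity of $N$).} Since $V\in C^2([0,\ell])$ on the compact interval, $N=V'$ is $C^1$ and therefore Lipschitz with constant $\max|V''|$.

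\textbf{Step 2 (regularity of $D$).} Here I use the first-order system of Appendix A.1, namely $V''=\mathcal F(V,V')$. Step 2 of that proof shows $\mathcal F$ is globally Lipschitz on $\mathbb R^2$ with some constant $L$. The map $x\mapsto (V(x),V'(x))$ is $C^1$ on $[0,\ell]$, hence Lipschitz with constant $M:=\max(|V'|+|V''|)$. Composing gives that $V''$ is Lipschitz with constant $LM$. Therefore $D=V''+V'$ is Lipschitz. This is the essential point: $V$ is only known to be $C^2$, not $C^3$, and the Lipschitz property of $\mathcal F$ is what substitutes for the missing third derivative.

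\textbf{Step 3 (positive lower bound for $D$).} Lemma~\ref{lemma:positivity}~(iii) gives $D(x)>0$ for every $x\in[0,\ell]$. Since $D$ is continuous on the compact interval $[0,\ell]$, it attains its minimum, so $\min_{[0,\ell]} D>0$.

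I expect the main obstacle to be this step, because the pointwise positivity argument in Lemma~\ref{lemma:positivity}~(iii) is somewhat loose: it appeals to ``sufficiently large $q^\top a$'', which need not be available in $\tilde K$. If that argument needs reinforcing, I would argue directly from the identity
\[
D=\sup_{a\in\tilde K}\frac{2}{|a|^2}\bigl\{\delta V+(p+q^\top a)V'\bigr\}.
\]
At $x=0$ we have $V'(0)=0$, so $D(0)=\sup_{a\in\tilde K} 2\delta V(0)/|a|^2>0$. For $x>0$, I would fall back on $D\ge V'>0$ whenever $V''\ge 0$. Where $V''<0$ could occur, I would use compactness of $\tilde K$ together with positivity of $V$ and $V'$ to keep $\delta V+(p+q^\top a)V'$ positive for a suitable $a$.

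Once $\min D>0$ is established, Steps 1 and 2 combine through the quotient estimate above to give Lipschitz continuity of $\beta$ on $[0,\ell]$.
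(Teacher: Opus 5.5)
Your proof is correct and is essentially the paper's: $\beta=N/D$, Lipschitz continuity of $V''$ from the sup structure, and $\min_{[0,\ell]}D>0$ from Lemma~\ref{lemma:positivity}~(iii). Your composition of the Lipschitz map $\mathcal F$ (Step~2 of Appendix~A.1) with $x\mapsto(V,V')$ is the same estimate the paper redoes directly for the family $F_a$.

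Your contingency argument for Step~3 is unnecessary, and it would not work as stated: if $p+q^\top a<0$ on all of $\tilde K$, positivity of $V$ and $V'$ does not make $\delta V+(p+q^\top a)V'$ positive. Your doubt about the paper's proof of (iii) is nevertheless fair, and it is repaired by showing $V''>0$ on $[0,\ell]$. Suppose $x_1>0$ is a first zero of $V''$, and set $h_a:=\frac{2}{|a|^2}\bigl\{\delta V+\bigl(p+q^\top a-\tfrac{|a|^2}{2}\bigr)V'\bigr\}$, so that $V''=\sup_{a\in\tilde K}h_a$. Then $V'(x_1)>0$, $\sup_{a\in\tilde K}h_a(x_1)=0$, and $h_a'(x_1)=2\delta V'(x_1)/|a|^2$ is bounded below by a positive constant uniformly in $a$. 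This forces $V''=\sup_a h_a<0$ just to the left of $x_1$, a contradiction. Hence $V''>0$ on $[0,\ell]$, and since $V'\ge 0$ there, $D\ge V''>0$.
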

\begin{proof}
Set the two continuous functions on the 
compact interval $[0,\ell]$ as
\[
N(x):=V_\kappa'(x), \quad D(x):=V_\kappa''(x)+V_\kappa'(x).
\]
By Lemma~4.4, 
$D(x)>0$ for all $x\in[0,\ell]$. Further, note that 
\[
m:=\min_{x\in [0,\ell]}>0.
\]
Since $V_\kappa\in C^2([0,\ell])$, the function $N=V_\kappa'$ is Lipschitz on
$[0,\ell]$. It remains to prove that $V_\kappa''$ is Lipschitz: 
Write 
\[
\displaystyle V_\kappa''(x)=\sup_{a\in\tilde K} F_a(x)
\quad (x\in[0,\ell]), 
\]
where we define
\[
F_a(x):=\frac{2}{|a|^2}
\left\{
\delta V_\kappa(x)
+\left(p+q^\top a-\frac{|a|^2}{2}\right)V_\kappa'(x)
\right\}.
\]
for $a\in\tilde K$.
We deduce that 
\[
\displaystyle L:=\sup_{(x,a)\in [0,\ell]\times \tilde{K}}
\left| F_a'(x)\right|<\infty
\]
since $\tilde K$ is compact, $\underline{a}:=\inf_{a\in\tilde K}|a|>0$, and 
$V_\kappa',V_\kappa''$ are continuous on $[0,\ell]$.
Hence,
\[
\left| F_a(x)-F_a(y)\right|\le L|x-y|
\quad \text{for all $0\le x,y\le \ell$ and $a\in \tilde{K}$}
\]
Then, for every $a\in\tilde K$,
\[
F_a(x)\le F_a(y)+|F_a(x)-F_a(y)|
\le U''_\kappa(y)+L|x-y|.
\]
Taking the supremum over $a\in\tilde K$, we obtain
\[
U''_\kappa(x)\le U''_\kappa(y)+L|x-y|.
\]
Interchanging $x$ and $y$ yields
\[
U''_\kappa(y)\le U''_\kappa(x)+L|x-y|. 
\]
Therefore,
\[
|V_\kappa''(x)-V_\kappa''(y)|\le L|x-y|.
\]
Thus $D=V_\kappa''+V_\kappa'$ is also Lipschitz. Finally, for $x,y\in[0,\ell]$,
\[
|\beta(x)-\beta(y)|
=
\left|\frac{N(x)}{D(x)}-\frac{N(y)}{D(y)}\right|
\le
\frac{|N(x)-N(y)|}{m}
+\frac{\| N\|_\infty}{m^2}|D(x)-D(y)|.
\]
where $\| N\|_\infty:=\sup_{x\in [0,\ell]}|N(x)|$.
Since both $N$ and $D$ are Lipschitz, it follows that $\beta$ is Lipschitz on
$[0,\ell]$.
\end{proof}

\bibliographystyle{alpha}
\bibliography{sections/zotero-references}

\end{document}